\newtheorem{theorem}{Theorem}[section]
\newcommand{\<}{\langle}
\renewcommand{\>}{\rangle}
\providecommand{\tr}{{\rm tr}}
\renewcommand{\phi}{\varphi}
\newcommand{\trho}[1]{{\rm tr}[#1 \rho]}
\begin{document}

\preprint{APS/123-QED}

\title{Emergent complex quantum networks in 
continuous-variables non-Gaussian states}

\author{Mattia Walschaers}
\email{mattia.walschaers@lkb.upmc.fr}
\affiliation{Laboratoire Kastler Brossel, Sorbonne Universit\'{e}, CNRS, ENS-Universit\'{e} PSL,  Coll\`{e}ge de France, 4 place Jussieu, F-75252 Paris, France}

\author{Nicolas Treps}
\affiliation{Laboratoire Kastler Brossel, Sorbonne Universit\'{e}, CNRS, ENS-Universit\'{e} PSL, Coll\`{e}ge de France, 4 place Jussieu, F-75252 Paris, France}

\author{Bhuvanesh Sundar}
\affiliation{Institute for Quantum Optics and Quantum Information of the Austrian Academy of Sciences, Innsbruck A 6020, Austria}
\author{Lincoln D. Carr}
\affiliation{Department of Physics, Colorado School of Mines, Golden, Colorado 80401, USA}

\author{Valentina Parigi}
\email{valentina.parigi@lkb.upmc.fr}
\affiliation{Laboratoire Kastler Brossel, Sorbonne Universit\'{e}, CNRS, ENS-Universit\'{e} PSL, Coll\`{e}ge de France, 4 place Jussieu, F-75252 Paris, France}

\date{\today}
\begin{abstract} 
We use complex network theory to study a class of continuous variable quantum states that present both multipartite entanglement and non-Gaussian statistics. We consider the intermediate scale of several dozens of components at which such systems are already hard to characterize.  In particular, the states are built from an initial \emph{imprinted} cluster state created via Gaussian entangling operations according to a complex network structure. We then engender non-Gaussian statistics via multiple photon subtraction operations acting on a single node. We replicate in the quantum regime some of the models that mimic real-world complex networks in order to test their structural properties under local operations. We then go beyond the already known single-mode effects, by studying  the \emph{emergent} network of photon-number correlations via complex networks measures. We analytically prove that the imprinted network structure defines a vicinity of nodes, at a distance of four steps from the photon-subtracted node, in which the emergent network changes due to photon subtraction. Moreover, our numerical analysis shows that the emergent structure is greatly influenced by the structure of the imprinted network. Indeed, while the mean and the variance of the degree and clustering distribution of the emergent network always increase, the higher moments of the distributions are governed by the specific structure of the imprinted network. Finally, we show that the behaviour of nearest neighbours of the subtraction node depends on how they are connected to each other in the imprinted structure.
\end{abstract} 

\maketitle

\section{\label{sec:intro}Introduction}

Large multiparty quantum systems are extremely hard to describe, although the complex behavior of their quantum states is what makes them appealing resources for quantum information processing. Intensive efforts have been dedicated to the direct representation of  quantum states via numerical and analytical approaches with the aim of classifying and detecting truly non-classical and useful quantum features, like entanglement\cite{Islam15,Laflorencie16}.
For example, tensor networks have been demonstrated to be very powerful tools for entanglement classification, and have  been applied in many fields ranging from condensed matter physics to conformal field theory~\cite{Orus19}. 
In the quantum information scenario, resource theories 
offer a general theoretical framework to classify useful resources to reach desired quantum features~\cite{Chitambar19}.
Different strategies that are becoming more and more popular exploit machine learning procedures~\cite{Carleo19,Dunjko18} to classify general complex features in the quantum realm
~\cite{Deng17,Carleo17,Carrasquilla20} often by representing them via neural networks~\cite{guo2019backpropagation}.  

\begin{figure*}[t]
\centering
\includegraphics[width=0.9\textwidth]{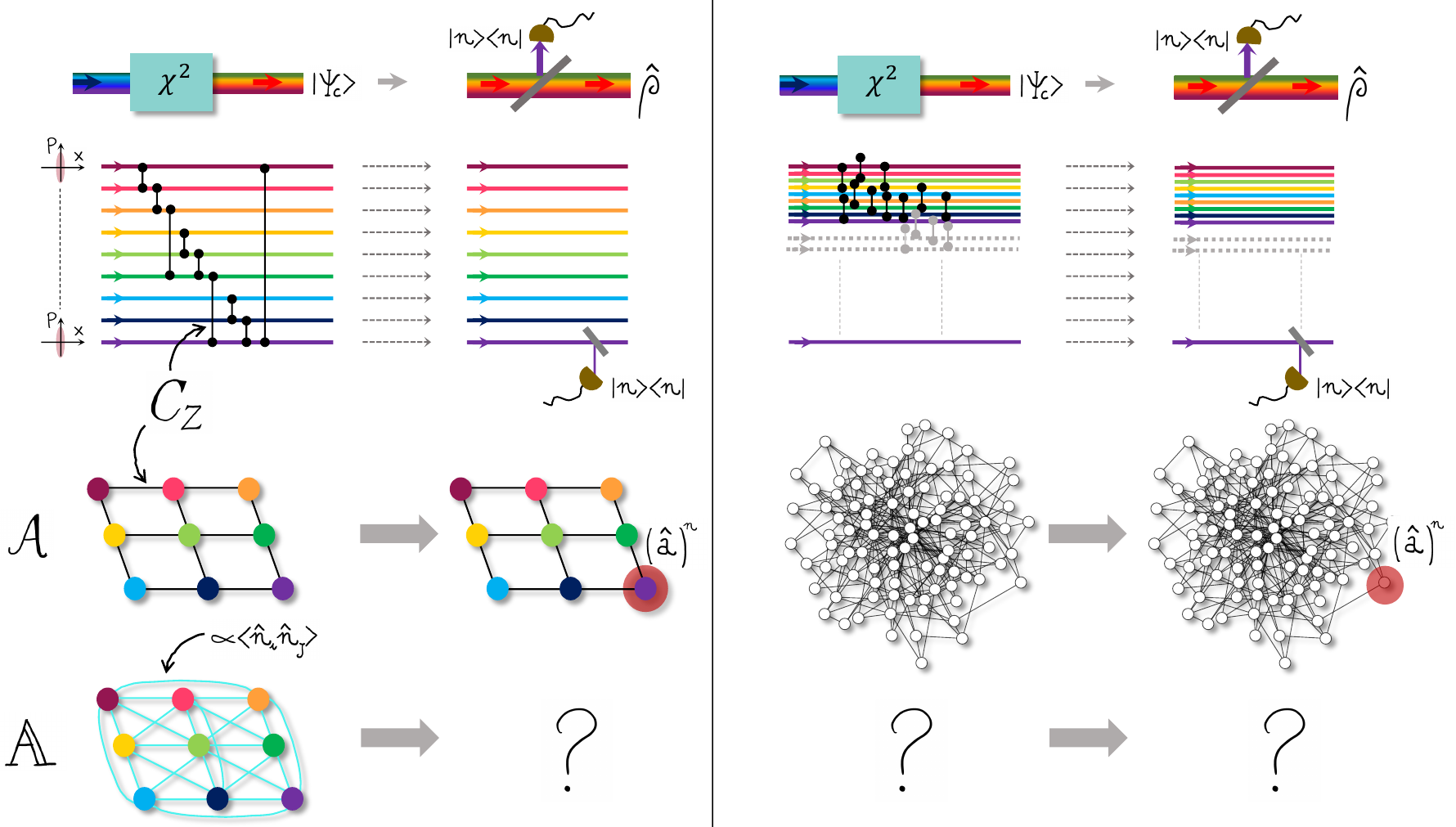}
\caption{\label{sketch} Left column, first row: sketch of nonlinear multi-mode  $\chi^{(2)}$ optical processes that demonstrate the deterministic implementation of large CV networks and mode-dependent multi-photon subtraction. Second row: circuit representation of the two processes. Third row: graphical representation of the so-called cluster states (imprinted networks) where the nodes are the different modes of the field and edges correspond to $C_Z$ gates between modes. Last and fourth row: emergent networks where edges are photon-number correlations between modes. Right column: same as  left but considering imprinted cluster with shapes typical of complex network models.}
\end{figure*}

In this work we focus on the complex behaviour of quantum states in continuous variable (CV) quantum systems and we tackle it via complex network theory.  Our work is motivated by all-optical platforms, based on continuous quantum observables, that can already produce large entangled networks~\cite{Asavanant19,Larsen19,Chen14}. These networks are made of traveling light fields with quantum correlations  between amplitude and phase values of different modes of the field, e.g. light at different colors.  They have Gaussian measurement statistics for amplitude and phase continuous variables, so that they can be easily simulated via classical computer. They are essential resources for measurement-based quantum computing but, in order to perform quantum protocols, they must acquire non-Gaussian statistics of the continuous variables. Non-Gaussian statistics can be induced   via mode-selective addition and subtraction of photons ~\cite{Ra19,Biagi20}, that are then called non-Gaussian operations. When the number of entangled systems --  in our case optical modes -- and the number of non-Gaussian operations grow, these systems quickly become hard to benchmark~\cite{WalschaersPRA17,Zhuang18,Albarelli18,Takagi18,cimini2020neural,walschaers2020conditional}. Standard quantum optical methods such as homodyne tomography rapidly become intractable, and the properties of the resulting quantum states are hard to unravel. 

In the most general scenario, even measurement outcomes of such systems become computationally hard to simulate~\cite{Chabaud17}, so that the associated sampling problem is one of the many variations of bosons sampling \cite{PhysRevLett.119.170501,10.1145/1993636.1993682}. In such sampling problems, photon-number correlations have shown to be efficient tools to extract  properties of such intricate systems \cite{Walschaers_2016,Giordani,PhysRevA.99.023836,Zhong1460,zhong2021phaseprogrammable,Xanadu-Advantage}. 
Average values of photon-number correlations can be analytically tractable, but they can only  unveil global properties of the system. Yet, to acquire a more detailed image of the state, we can use network theory to analyse distributions of photon-number correlations.

In this work, the use of network theory will be two-fold. First, the optical CV entangled networks can in fact be easily reconfigured in arbitrary shape~\cite{Cai17}. We can thus consider entangled networks which are generated through models studied in network theory to reproduce the features of real-world networks. They provide an excellent playground to explore whether mimicking real-world complex network structures~\cite{Nokkala18a,Sansavini20} provides an advantage for quantum information technologies, including quantum simulation and communication~\cite{Cai17, Arzani19} in a future quantum internet. Second, we will use network theory as a powerful tool for benchmarking entangled networks when affected by local non-Gaussian operations by building weighted networks of photon-number correlations. 
We can then  study how  different shapes of the initial entangled network can support, enhance, spread or destroy the non-Gaussian features provided by local subtraction of multiple photons.

\subsection{\label{subsec:sketch} Conceptual scheme}
The conceptual scheme of our analysis is shown in Fig.~\ref{sketch}.
Large CV entangled networks have been deterministically implemented via non-linear $\chi^{(2)}$ optical processes.
This operation entangles different modes (be they spatial, spectral or temporal) of the fields via an appropriately engineered parametric interaction~\cite{Roslund14,Cai17,Chen14,Yokoyama13,Asavanant19,Larsen19}. These processes are sketched in the left corner of the upper row of Fig.~\ref{sketch}, where the different modes are represented by different colors.
 The circuit representation of the generated states is sketched in the second row of the figure: the non-linear optical process is equivalent to different travelling optical modes occupied by squeezed vacuum states that are entangled via $C_Z$ gates.  This generates the {\em cluster state} \cite{Menicucci06,Gu09}. In the third row we show the graphical representation of the state: the different optical modes are represented by different nodes of the network which are linked by $C_Z$ gate entangling operations, counted by the entry 1 in the adjacency matrix $\mathcal{A}$ of the network. This structure is an {\em imprinted} network, as it builds the initial quantum states. In the right side of the first column of Fig.~\ref{sketch} we picture the action of multiple-photon subtraction, i.e. the repeated application of the photon annihilation operator  $\hat{a}$ on one specific mode of the field.
The probabilistic implementation of this operation consists of a  mode-selective beam-splitter that sends a small fraction of light to a photon counter: when $n$ photons are detected  an $n$-photon-subtracted state is heralded. The process can be implemented via non-linear interaction with supplementary gate fields~\cite{Ra19}. 
In the fourth row of the first column we show the network of photon-number correlations between the different field modes that emerge from the imprinted network. This is the {\em emergent} network. Its adjacency matrix 
$\mathbb{A}$ contains continuous values between $0$ and $1$, indicating the strength of photon-number correlations between couple of nodes.
In this work we are interested in following the changes of this emergent network of photon number correlations after photon-subtractions on one node as a benchmark of the desired non-Gaussian properties of CV quantum states.

As described above complex networks have a dual role in this work.
First, we consider complex network structures for the {\em imprinted} network of entangling $C_Z$ gates, as shown in the third row of the second column of Fig.~\ref{sketch}. We will use complex network models that reproduce particular features of real-world networks. Second,  we probe the impact of photon subtraction on the emergent networks by analyzing typical network measures, like degree and clustering, that quantify the number of links and their structure and that are commonly used to characterize complex networks.

\subsection{\label{subsec:summary} Summary of the results}

The entangling $C_Z$-gates in the imprinted network generate short range correlations between nodes.  Also, the non-Gaussian operations we consider here - photon-subtractions - are applied locally on a single node. Under such conditions  the effect of photon subtraction on a regular graph is limited \cite{Walschaers18}. On the contrary, here we probe imprinted networks constructed from complex network models, where typical distances between nodes are short. Our results are as follows:
\begin{itemize}
\item The effect of photon-subtraction leads to local changes in the emergent network of photon-number correlations. We analytically prove that the imprinted network structure defines a vicinity of nodes around the photon-subtracted node in which the emergent network changes due to photon subtraction.
\item For imprinted complex networks we see highly connected emergent networks of photon-number correlations. In such systems we show that photon-subtraction changes the structure of the correlation networks in a profound way in the vicinity of the photon-subtracted node.
\item We compare different complex network models, like  Barab\'{a}si-Albert  or Watts-Strogatz, and their emergent correlation structure before and after subtraction of ten photons. The analysis of the distribution of degrees and clustering coefficients of the emergent correlation networks reveals that the amount of randomness of links or the inhomogeneity  of number of links in different network models affects the emergent correlation networks, before and after the photon subtraction. We see that some networks are more efficient than others in spreading the non-Gaussian effect.  
\item For networks in which a majority of nodes is affected by photon subtraction, we show that the global clustering and degree distributions are altered in the same way: the mean and variance increase. This is in contrast with the structure in the tails of the distributions, characterized by higher moments, where the effect of photon subtraction is governed by the specific structure of the imprinted network.
\item We unveil  
the importance of the local network structure in the vicinity of the node of photon subtraction. It is the connectivity of the imprinted sub-network spanned by this node and its neighbors that drives the changes due to photon subtraction.
\end{itemize}

\subsection{\label{subsec:structure} Structure of the Paper}
The Article is outlined as follows. Section~\ref{sec:CVNetworks} introduces the basic concepts of complex networks and CV cluster states used in this work to make the Article accessible to readers with different backgrounds -- readers with one or both areas of expertise may choose to skip this section or particular subsections. We describe the imprinted network structure and  review  non-Gaussian operations and their importance for getting non-Gaussian CV cluster states. We introduce the emergent correlation networks and complex network measures. Then in Sec.~\ref{sec:GaussianNetworks} we look at emergent correlation networks for Gaussian cluster states when different complex network models are used for the imprinted network. This section forms a baseline for the ensuing non-Gaussian analysis. In Sec.~\ref{sec:NonGaussianNetworks} we describe the evolution of the emergent correlation networks when repeated photon subtractions are applied. We show that photon subtraction in a single node only affects a certain vicinity of the subtraction node. We then analyse the global impact of the non-Gaussian operation on the emergent network. In Sec.~\ref{sec:AllDistance} we analyse the local effects of photon subtraction. We show  that the sub-networks formed by nodes at different distances from the subtracting point have a different influence on the statistics of the non-Gaussian graphs. We then reveal the driving mechanism for the sub-networks composed by all the node at distance one from the subtracting point , i.e. all the nodes that have a direct link with the subtracting node in the imprinted network.  Finally,  in Sec.~\ref{sec:Discussion}  we comment on general features of non-Gaussian correlations in photon-subtracted networks and specific features dependent on the imprinted network model. 

\section{Quantum Complex Network theory and Continuous Variable Quantum Systems}\label{sec:CVNetworks}
\subsection{\label{subsec:CN} Complex networks and quantum physics}
In the last decades, network theory has made significant progress in describing collective features and functionality of complex systems \cite{Newman18,Barabasi16}.
Network-based descriptions are pivotal in social and biological science as well as  in technological infrastructures like power grids and information networks such as the internet. The study of complex network structures has spread in physics~\cite{AlbertRMP02,DorogovtsevRMP08} helping in the description of complex physical systems.  Subfields in physics utilizing complex networks include statistical physics, condensed matter and quantum physics with, e.g., the study of the Ising model and Bose Einstein condensation~\cite{Bianconi15,Halu13,Jahnke08,Burioni01,Mulken11,Valdez17,Nokkala18}.  More recently the study of complex networks has become relevant for quantum systems and procedures employed in quantum information technologies~\cite{Biamonte19,Sansavini20,Cabot18,Nokkala18a,Chakraborty16,Cuquet09,Faccin14} indicating that a dedicated theory of quantum complex networks needs to be built, especially for networks with no classical equivalent, like those based on quantum correlations or quantum mutual information. In particular, emergent complex networks based on quantum mutual information have determined critical points for quantum phase transitions~\cite{Valdez17,sundar2018complex,Buca19}. 
Likewise, complex network theory has been successful in determining self-similarity in entanglement structure of spin-chains~\cite{Sokolov2020}  as well as new kinds of structured entanglement emerging from quantum cellular automata~\cite{hillberry2020entangled} on qubit/gate/ciruit-based quantum computers. 
Networks are naturally evoked in the quantum regime in relation to the quantum internet~\cite{Kimble08}, where it is not clear yet if the best arrangement of its components will take a complex shape like the classical internet.
Networks are however pivotal in all quantum technologies.
Indeed, quantum information algorithms and quantum transport can be mapped to quantum walks on regular and complex networks~\cite{PhysRevLett.102.180501,Mohseni2008,Plenio_2008,annurev2016}.  Complex networks have also a crucial role in near-term quantum information processing because they describe networked noisy intermediate-scale quantum computers~\cite{awschalom2019development,altman2019quantum}. Thus complex network theory provides a versatile toolbox, as it can be applied to different quantum features, and is very efficient in revealing emerging collective structural mechanisms.

Here we apply, for the first time, complex network analysis to CV multipartite quantum states. We focus on the ones that can be generated in the more advanced optical platforms, but the method can be applied to general CV states.

Networks are a collection of nodes and links. This is a very versatile conceptual structure that can be applied to any kind of relation between a collection of physical systems: receivers and senders in an information network linked by physical channels; atomic spins interacting via magnetic forces; or physical observables linked by correlation relations. Modern complex network theory in the physical, social, and life sciences is built on graph theory from mathematics and computer science.
 The adjacency matrix is the central mathematical object of complex network theory: a non-zero term in the matrix indicates a link between two nodes, where the indices of the matrix determine the nodes. The work done in this Article is based on constructing the adjacency matrices for relevant networks, and subsequently extracting relevant properties from them.\\

In quantum information, graphs define the structure of the  so-called \emph{graph} or \emph{cluster states}~\footnote{Cluster is sometimes reserved for  graphs  allowing  for  universal  quantum computing.  In  this  work, however,  we  use  the  terms  \emph{cluster state}  and  \emph{graph  state}  as synonyms.}.  They correspond to multipartite quantum states with a specific entanglement structure introduced in the context of measurement-based quantum computing~\cite{Raussendorf01,Raussendorf03,Menicucci06,Gu09}. For such cluster states, a non-zero term in the adjacency matrix indicates that an entangling gate has been applied between two qubits or between two quantum fields in two different optical modes (where the information is encoded in discrete or continuous variables, respectively). In this section we first provide a brief introduction to CV quantum optics in \ref{sec:QuantumOptics} and we review the CV cluster states that are induced by the imprinted networks in Sec.~\ref{sec:sub-clust}.  Then in Sec.~\ref{sec:nonGaussIntro} we introduce the photon subtraction operation that creates non-Gaussian features in these quantum states. We define the emergent network of photon-number correlations in Sec.~\ref{sec:emer-corr}.  Finally, in Sec.~\ref{sec:compNet} we review network measures in the context of complex network models.

\subsection{Continuous variable quantum optics}\label{sec:QuantumOptics}

A $m$-mode light field \cite{RevModPhys.92.035005} can be described as an ensemble of $m$ quantum harmonic oscillators with creation and annihilation operators which obey the commutation relation $[\hat a_j, \hat a_k^{\dag}]= \delta_{j,k}$. In the CV framework, we focus on ``position'' and ``momentum'' variables of these harmonic oscillators, $\hat x_k =\hat a^{\dag}_k + \hat a_k$ and $\hat p_k = i(\hat a^{\dag}_k - \hat a_k)$, also called \emph{quadratures}.
Generic quantum states of such quantum harmonic oscillators are hard to characterize, but the subclass of Gaussian states is very well understood. These states are completely described by the quadrature expectation values (the mean field) and covariance matrix $V$. To define the latter, let us introduce the $2m$-dimensional vector $\vec{\hat{\xi}} = (\hat x_1, \dots, \hat x_m, \hat p_1, \dots \hat p_m)^{\top}$, and introduce
\begin{equation}\label{eq:CovMatrixDef}
    V = {\rm Re}~\langle \vec{\hat{\xi}} \,  \vec{\hat{\xi}}^{\top} \rangle - \langle \vec{\hat{\xi}}  \rangle\langle \vec{\hat{\xi}}^{\top}\rangle, 
\end{equation}
where $\langle . \rangle$ denotes the expectation value of the observables in the state $\rho$. If this state is Gaussian, all its higher order correlations can be expressed in terms of $V$ \cite{Verbeure2011} (note that this property is explicitly used in Appendix \ref{app:corr}).

In this work, we will associate specific optical modes with the nodes of a network, as shown in Fig.~\ref{sketch}. Such networks are naturally realized in the cluster state formalism of CV measurement-based quantum computing. It was experimentally demonstrated that such cluster states can be generated in arbitrary shapes \cite{Cai17}.

\subsection{Clusters: imprinted quantum networks}
\label{sec:sub-clust}
\begin{figure}
\centering
\includegraphics[width=0.45\textwidth]{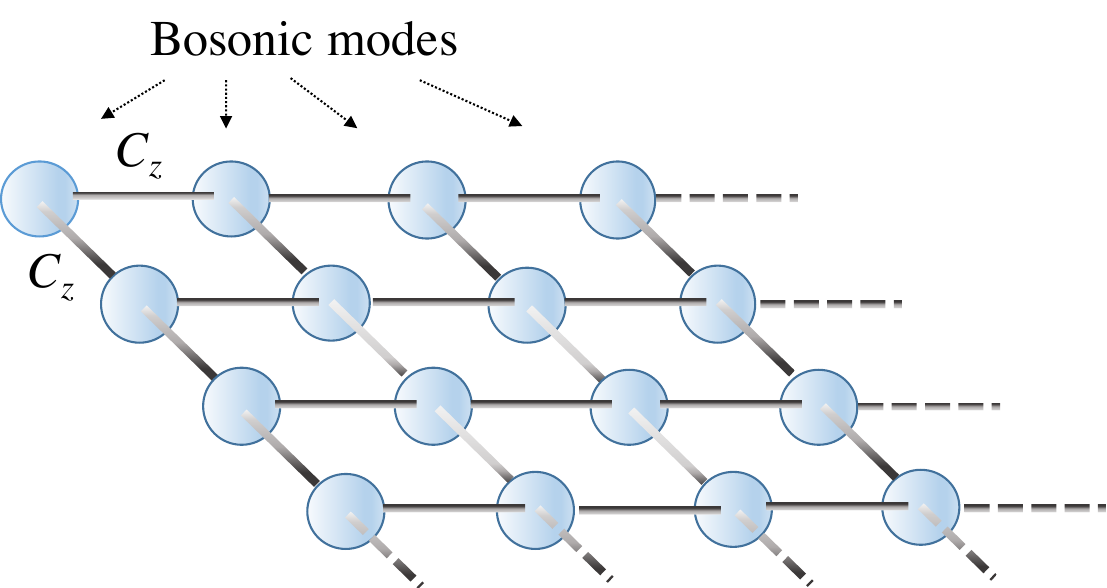}
\caption{Network representation of a two dimensional quantum cluster  state, here pictured as a network where the nodes represent the bosonic modes and the links represent $C_Z$ operations between pairs of nodes. \label{clusterpicture}}
\end{figure}
In quantum optics ideal cluster states require infinite energy to produce, it is therefor common to consider approximate cluster states, based on applying $C_Z$ gates on a set of squeezed vacuum modes. 
The finitely squeezed states can be written as $\hat{S}(s)\vert 0\rangle^{\bigotimes N}$, where $\hat{S}(s)$ is the squeezing operator, and $\vert 0\rangle$ the vacuum state. The parameter $s > 1$ denotes the squeezing, which for simplicity is chosen to be the same in all $N$ copies. The unitary $C_Z$ gates that entangle these squeezed vacuum modes are given by $C_Z= \exp(\imath \hat{x}_i \otimes \hat{x}_j)$.

This results in a Gaussian state with covariance matrix (\ref{eq:CovMatrixDef}) given by $V_s = {\rm diag}[s, \dots, s, 1/s, \dots, 1/s]$. The first $N$ elements in the diagonal are the variances of the $x$ quadrature of the $N$ modes (nodes) $\langle \hat{x}_i^2 \rangle=s\langle \hat{x}^2 \rangle_v=s$ where $ \langle x^2 \rangle_v$ is the variance of the quadrature for the vacuum state which is taken equal to 1. The last $N$ elements are the variances of the $p$ quadrature $\langle \hat{p}_i^2 \rangle=1/s$ of the $N$ modes.
The approximate cluster state that results by acting on $\hat{S}(s)\vert 0\rangle^{\bigotimes N}$ with a network of $C_Z$ gates is then described by \cite{Meni11} :
\begin{align}\label{eq:V}
    V = \begin{pmatrix}V_{xx} & V_{xp} \\ V_{px} & V_{pp}\end{pmatrix} =  \begin{pmatrix}\mathds{1} & 0 \\ {\cal A} & \mathds{1}\end{pmatrix} V_s \begin{pmatrix}\mathds{1} & {\cal A} \\ 0 & \mathds{1}\end{pmatrix} \notag \\ =\begin{pmatrix}s\mathds{1} & s {\cal A} \\ s {\cal A} & s {\cal A}^2+\mathds{1}/s\end{pmatrix}
\end{align}
Here, $V$ is a $2N \times 2N$ matrix divided into four $N \times N$ blocks. $V_{xx}$ and $V_{pp}$ describe the correlations among the $x$- and $p$-quadratures, respectively, whereas  $V_{xp}$ and $V_{px}$ contain all correlations between $x$- and $p$-quadratures.
The presence of ${\cal A}^2$ in $V_{pp}$ highlights that correlations extend not only between nearest neighbor nodes, but also between next-nearest neighbors of the imprinted network.
The elements $[\mathcal{A}^2]_{ij}$ are in fact known to correspond to the number of walks of  exactly two steps  from $j$ to $i$ of the network $\mathcal{A}$~\cite{Gu09,Walschaers18}. 

The $C_Z$ gates, which create the entanglement, can be implemented according to any network shape, i.e., any adjacency matrix $\mathcal{A}$. 
When used in measurement-based quantum computing,  cluster states are built to ensure persistence of entanglement~\cite{Briegel01}. This means that a measurement on one node only locally affects  the state and the surviving entanglement links can be further exploited for the next steps in measurement-based computing. To this end, some regular 2D graph structures, e.g., hexagonal or triangular lattices, have been proven to allow for universal computing.  That is, arbitrary unitary operations can be performed  via local operations and classical communication on the cluster.  In contrast, others have been discarded, e.g. the tree graph~\cite{VandenNest06}. 
Here we go beyond such regular structures, motivated by the fact that CV quantum networks in optical setups can be easily reconfigured to arbitrary shapes~\cite{Cai17}. We want to indeed replicate in the quantum regime  some of the models  that  mimic real-world complex networks~\cite{Nokkala18a,Sansavini20} in order to test their structural properties under local operations. \\
In the remainder of the Article we refer to the network that describes the pattern $\mathcal{A}$ of $C_Z$ gates that are applied to create the Gaussian cluster state as the {\em imprinted network}. 

\subsection{Non-Gaussian operations in continuous variable platforms}\label{sec:nonGaussIntro}
Cluster states are characterized by Gaussian statistics of quadrature measurements, which allows for a compact statistical description even when they have a large size. However, for quantum computing protocols, cluster states must also acquire non-Gaussian quadrature statistics via non-Gaussian operations. Unlike the Gaussian case, the quantum features of such non-Gaussian networks are not trivial to classify~\cite{Walschaers17,Walschaers18,Walschaers19s}.  Examples of non-Gaussian operations are the conditional implementation of  single-photon subtraction and addition, i.e. the action of  
annihilation and creation operators $\hat{a}$ and $\hat{a}^{\dag}$ ~\cite{Ra19,Ra17,Wenger04,Biagi20,Parigi07,Zavatta04,Lvovsky20}.
Such operations have long been investigated as primitive for two important operations for quantum protocols: entanglement distillation and the generation of Wigner negativity.  
Single-photon subtraction and addition can also be combined to engender high-order non-Gaussian operations~\cite{Zhuang18,Takagi18,Arzani17,Yukawa13}.\\

In this Article, we focus on multi-photon-subtraction operations that can ideally be represented, in a multimode case, by the following operation on a state $\rho$:
\begin{equation}\label{subrho}
\rho \mapsto \frac{\hat a_{S_n}\dots \hat a_{S_1} \rho \hat a^{\dag}_{S_1} \dots \hat a^{\dag}_{S_n}}{\tr[\hat a^{\dag}_{S_1} \dots \hat a^{\dag}_{S_n}\hat a_{S_n}\dots \hat a_{S_1} \rho ]},
\end{equation}
where ${S_i}$ denotes a particular mode. 
In general, Eq.~(\ref{subrho})  describes repeated subtractions from different nodes, or even from superpositions of different nodes, that have recently been experimentally implemented~\cite{Ra17,Ra19}.
When these operations are applied on multimode quantum states characterization of the resulting states is not a trivial task.  Recent results have depicted the rules of thumb for entanglement and Wigner negativity~\cite{Walschaers17,WalschaersPRA17,Walschaers18,Walschaers19s}, indicating that a deeper structural analysis would be beneficial for a more comprehensive picture.

Here, we specifically consider repeated photon subtractions from one single node of the cluster state in Eq.~(\ref{eq:V}).
There are two main reasons for this choice. First, we focus on the simplest scheme providing significant statistics. In fact, when multiple subtractions from an arbitrary superposition of nodes are considered, the analysis becomes computationally hard ~\cite{Chabaud17}. Second, we aim at probing the extent of the effect of photon subtraction in the most local way possible. 
In previous work, we have shown that photon subtraction on a given node induces non-Gaussian features in its nearest and next-to-nearest neighbor nodes~\cite{Walschaers18}. This Article goes beyond single-point features such as local averages.  Instead, we focus on the changes induced in the \emph{correlations} between those nodes, including \emph{beyond} next-nearest neighbors. 
We  then study how different imprinted network shapes  $\mathcal{A}$  spread or destroy the non-Gaussian features created by photon subtraction.  This is a problem  that is very typical for classical information networks, studied here in the new context of quantum correlations. 

\subsection{Emergent complex networks \\ of photon number correlations} \label{sec:emer-corr}
Covariance matrices are sufficient to explain the behaviour of Gaussian states. In the case of non-Gaussian states  expectation values of higher order operators are needed. 
In this Article, we focus on photon-number correlations, that are simple non-Gaussian observables with a clear physical interpretation.
Photon number correlations can be written in terms of quadrature correlations. The expression will then involve fourth moments of quadratures, which are sensitive to the non-Gaussianity -- i.e., departure from Gaussian shape -- of the quadrature distribution~\cite{WalschaersPRA17}.  
Moreover, photon-number correlations are extensively used to study features of non-Gaussian processes in quantum optics, such as Hong-Ou-Mandel interference~\cite{PhysRevLett.59.2044}, photon bipartite entanglement~\cite{PhysRevLett.102.193601}, and photon distinguishability~\cite{Walschaers_2020}. They also serve to benchmark single-photon sources~\cite{Senellart}, $n$-photon sources ~\cite{Lachman19}, and quantum protocols such as boson sampling~\cite{Walschaers_2016,Giordani}. 

To consider structural effects, we introduce a second network for each cluster state, composed by the emergent structure of photon-number correlations between pairs of modes.
As such, we define the correlation matrix $\mathbb{C}$:
\begin{align}\label{Cij}
[\mathbb{C}]_{ij} = \frac{\lvert \<\hat{n}_i \hat{n}_j\>-\<\hat{n}_i \>\<\hat{n}_j\>\rvert}{\sqrt{\left(\<\hat{n}_i^2\>-\<\hat{n}_i\>^2\right)\left(\<\hat{n}_j^2\>-\<\hat{n}_j\>^2\right)}},
\end{align}
where we take the absolute value of the correlation, since we are purely interested in the strength of the correlation, rather than its sign. 
The values of $\lvert \<\hat{n}_i \hat{n}_j\>-\<\hat{n}_i \>\<\hat{n}_j\>\rvert$ depend on the number of photons in the system; it may be higher for two weakly correlated nodes with very high photon numbers, than for strongly correlated nodes with very small photon numbers. Due to its conditional nature, photon subtraction locally changes the photon number in the system, thus making it impossible to genuinely compare the resulting values of $\lvert \<\hat{n}_i \hat{n}_j\>-\<\hat{n}_i \>\<\hat{n}_j\>\rvert$ in the two cases. The denominator in Eq.~(\ref{Cij}) solves this problem by renormalizing the correlation to be confined between zero and one, where one implies that both nodes contain the same number of photons, regardless of how many photons there are.

Eq.~(\ref{Cij}) ultimately allows us to look at the correlation network, as given by its weighted adjacency matrix:
\begin{equation}\label{corr}
\mathbb{A} = \mathbb{C} - \mathds{1}.
\end{equation}
In the following we will characterize what kind of correlation networks ($\mathbb{A}$) emerge, in the same spirit of mutual information networks in~\cite{Valdez17},  when photon-subtraction operations are applied on cluster states with different shapes $\mathcal{A}$. Hence, network structures are considered at two different levels: one is the imprinted network which describes the entangling gates that induce Gaussian entanglement between position and momentum in the cluster state of Eq.~(\ref{eq:V}); the other is the emergent network of photon-number correlations defined in Eq.~(\ref{corr}) via Eq.~(\ref{Cij}). This second network will be analyzed via measures and metrics typical of complex network theory, as we detail in the following in Sec.~\ref{sec:compNet}.

\begin{figure}
\centering
 {\includegraphics[width=0.28\textwidth]{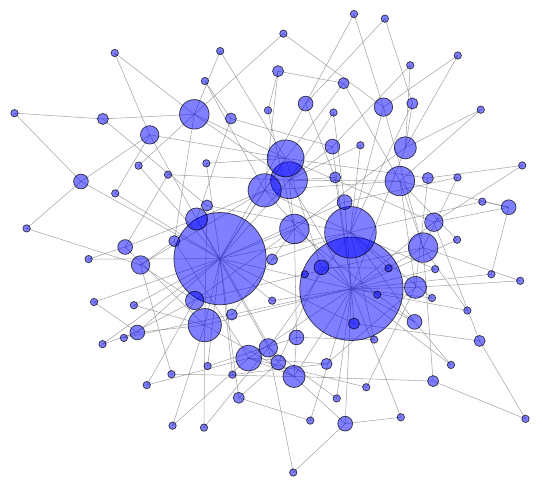}} \quad
   {\includegraphics[width=0.28\textwidth]{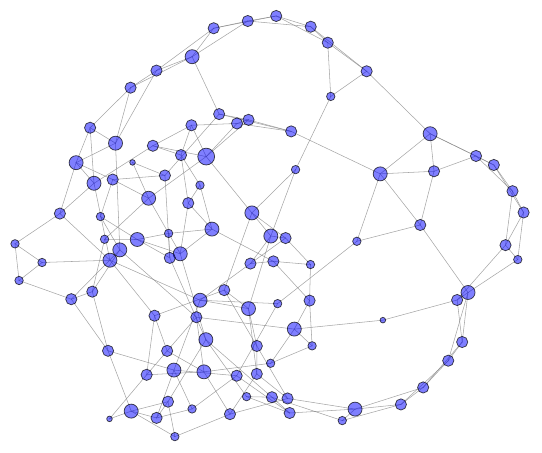}} \quad
     {\includegraphics[width=0.28\textwidth]{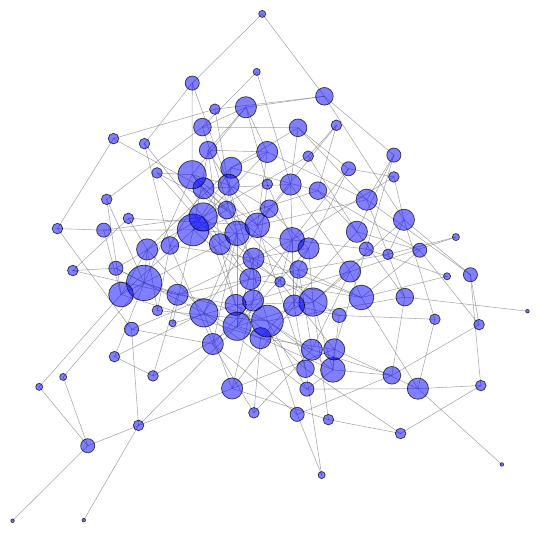}} \quad
\caption{Three complex networks of $100$ nodes.  Top: Barab\'{a}si-Albert (BA) network built via preferential attachment with $m=2$ nodes added at each step.  Middle: Watts-Strogatz (WS) network built via rewiring, with probability $p_{WS} = 0.2$, a regular network with degree per node $d=2$.  Bottom: Erdos-Renyi (ER) network with connection probability $p_{\mathrm{ER}}=0.04$. The size of the nodes is proportional to their degree. \label{network}}
\end{figure}

\subsection{Complex network measures \\ in complex network models}\label{sec:compNet}
Quantitative measures of network structures have been introduced by network theory~\cite{Newman18,Barabasi16}. From the adjacency matrix components we can calculate the degree $D_i$ for each node $i$, i.e., the number of links connected to it, as \begin{equation}\label{eq:degree}D_i= \sum_j \mathbb{A}_{ij}.\end{equation} The degree distribution $p(D)$ gives the probability for a randomly picked node to have the degree $D$.
Even if is not possible to get the full information on the structure of a network by its degree distribution, it can be  informative to look at the shape of the distribution. Similarity arguments between different real-world networks have been based in part on analogy between their degree distributions, which often follow a power-law rule. Moreover, many crucial properties of networks, like their robustness to perturbations and the spread of contamination, are determined by  the functional $p(D)$ \cite{Barabasi16}.

A second quantitative measure of complexity is the local clustering coefficient. It gives information on the connections between the neighbors of a specific node, thus keeping track of local correlations around a point. 
A common way of defining the clustering coefficient is the number of triangles to which the node belongs divided by the number of triplets. It can be recovered from $\mathbb{A}$ as
\begin{equation}\label{eq:cluster}
Cl_i=\dfrac{\sum_{ j \neq  k} \mathbb{A}_{ij} \mathbb{A}_{jk} \mathbb{A}_{ki}}{\sum_{ j \neq  k} \mathbb{A}_{ij} \mathbb{A}_{ik}  }
\end{equation}
for $i \neq j \neq  k$.\\

 Here we briefly review some of the paradigmatic models that have been proposed for real-world networks: the random network model called Erd\H{o}s-R\'{e}nyi (ER), the Barab\'{a}si-Albert (BA) model and the Watts-Strogatz (WS) model.
The ER model builds networks by randomly connecting nodes according to a uniformly random probability $p_{\mathrm{ER}}$  for two nodes to be connected. The resulting networks exhibit a binomial distribution of links per node.  The ER model is able to reproduce the typical average shortest path distances between nodes of real networks. 

A second model that has been introduced to reproduce typical complexity signatures of real networks is the BA model. It describes network formation processes based on the preferential attachment model: the network grows by adding new nodes.  These new nodes attach with $m$ links to old nodes.  The probability of connection is proportional to the  degrees of the existing nodes, such that the highest degree nodes are the preferred ones. This model is able to reproduce the power-law distribution in the degree, and thus the existence of ``hubs'', i.e., nodes with very large degree, as in real-world networks.

Finally, the WS model is able to reproduce the small-world mechanism, where any node is a short path from any other in the network.  Specifically, the distance between any two nodes grows as the log of the total number of nodes. It is built by starting from a regular network in which each vertex has a fixed degree $k$; for instance, $k=2$ would correspond to a lattice in tight binding approximation. Then nodes are rewired according to a probability $p_{WS}$. One interesting feature of this model is that it allows one to tune continuously from regular ($p_{WS}=0$)  to random ($p_{WS}=1$) networks.

To achieve reasonable statistics, we consider many realisations of networks made of $100$ nodes for each model. For every model we also explore different parameters. These networks are small compared to typical real-world networks, but even for this small scale the different models exhibit visibly different features. In Fig.~\ref{network}, we show a BA network built by adding $m=2$ new nodes at each step in network growth; a WS network built starting from a regular network with degree per node $k=\langle D\rangle =2$ and rewired with a probability $p_{WS} = 0.2$; and an ER network with connection probability $p_{\mathrm{ER}}=0.04$. One observes clear differences between the three networks, with, for example, the emergence of easily visible hubs in the BA model, shown as large blue discs in the figure. By taking $100$ network realizations for each model one observes that the resulting degree distribution, shown in Fig.~(\ref{histo}), is distinct in the three cases, even if they have similar average value. In particular, the logarithmic scale shows the power-law distribution for the BA networks. 

In the following, we choose particular network models to shape the adjacency matrix  $\mathcal{A}$ of the imprinted networks. In the rest of the Article we will consider only the  BA and WS models as the ER network shows very similar features as the WS models with high rewiring probability $p_{WS} \to 1$. 
With the probabilistic generation of a statistically significant number of networks for each model, it will be possible to reveal specific features, in this case quantum ones, that are determined by the structure of the network.

\begin{figure}
\centering
\includegraphics[width=0.45\textwidth]{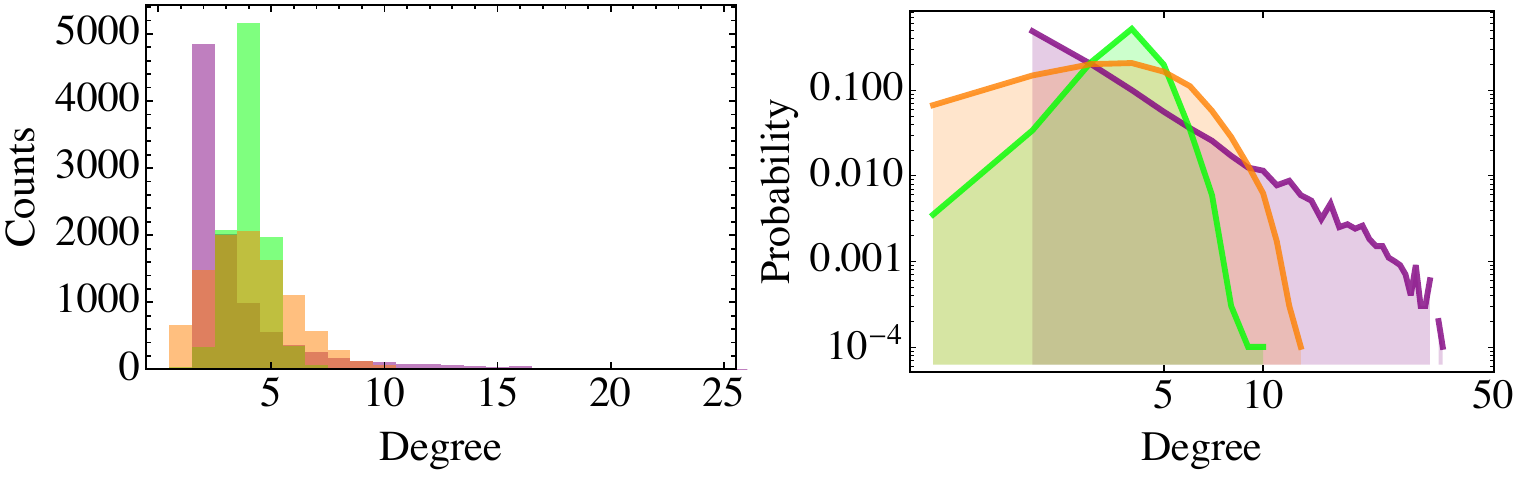}
\caption{Histogram of the degree distribution for $100$ networks of the three types shown in Fig.~\ref{network},  BA (purple), WS (green), ER (orange). On the left the linear scale is used while on the right the scale is double-logarithmic to emphasize the appearance of the power-law tail for the BA networks. The average degree is $\langle {\cal D} \rangle= 4.44$ (BA), $4.0$ (WS) and $4.09$ (ER). \label{histo}}
\end{figure}

\section{Emergent networks in complex Gaussian cluster states}\label{sec:GaussianNetworks}

\begin{figure*}
\centering
\includegraphics[width=0.95\textwidth]{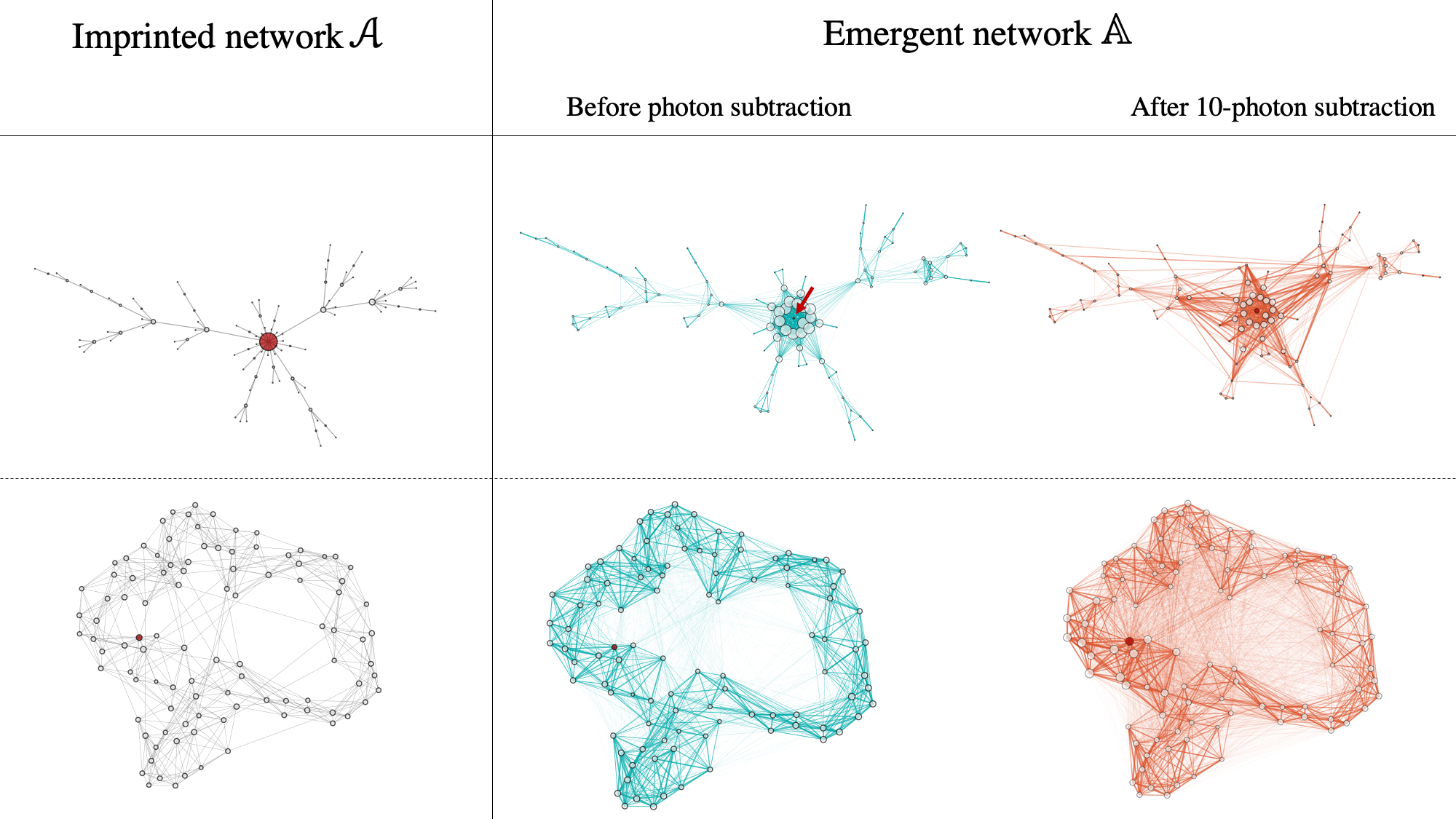}
\caption{Imprinted networks $\mathcal{A}$ (left column) give rise to emergent networks $\mathbb{A}$ (middle column), which can then further undergo photon subtraction (right column). Photon subtraction at indicated red node; node sizes show the degree. Imprinted networks include Barab\'{a}si-Albert with $m=1$ (top row) and Watts-Strogatz generated from a regular one-dimensional network in which every node is connected to its $k=5$ nearest neighbors with a rewiring probability $p_{WS}=0.05$ (bottom row).  \label{networks}}
\end{figure*}

In this section, we explore the emergent photon-number correlation networks for different imprinted networks before any photon subtraction. The quantum state of such networks hence exhibits Gaussian statistics of quadratures. The results of this section form a benchmark to compare with the  effect of photon subtraction in Sec.~\ref{sec:NonGaussianNetworks} and Sec.~\ref{sec:AllDistance}.

The imprinted networks are obtained by applying $C_Z$ gates to a set of squeezed vacuum modes according to an adjacency matrix ${\cal A}$ for the BA and WS models defined in Sec.~\ref{sec:compNet}.  We then examine the emergent network with adjacency matrix $\mathbb{A}$. Throughout all our simulations, we fix the amount of squeezing to $15 {\rm dB}$ (i.e., $s\approx 31.6$ units of shot noise) for each squeezed vacuum mode.

As described in Sec.~\ref{sec:sub-clust}, the correlation between quadratures of different modes goes beyond the graphical structure imprinted by the $C_Z$ gates, as they appear between nearest neighbours but also between next-nearest-neighbors. We then expect photon number correlations to inherit the same behaviour.

The calculation of photon number correlations for the cluster before photon subtraction can be carried out analytically by using the techniques of Appendix \ref{app:corr}. We obtain the weighted adjacency matrix (as derived in Appendix \ref{App:Gaussian})
\begin{align}\label{Cij-Gaus}
[\mathbb{A}]_{ij}^G = \begin{cases}\frac{\frac{s^2}{8}(([\mathcal{A}^2]_{ij})^2+2\mathcal{A}_{ij})}{\sqrt{\mathfrak{N}(s,{\cal D}_i)\mathfrak{N}(s,{\cal D}_j)}}, &\text{for $i \neq j$}\\
0 &\text{for $i = j$}
\end{cases}
\end{align}
where $\mathfrak{N}(s,{\cal D}_k)=( s^2 + 1/s^2 + s^2({\cal D}_k)^2 + 2{\cal D}_k - 2)/8$ is a normalization factor depending only on the initial squeezing value $s$ and the degree ${\cal D}_k$ of the node $k$ in the imprinted network. Recall from Sec.~\ref{sec:sub-clust} that $[\mathcal{A}^2]_{ij}$ is the number of different walks of exactly two steps that connect nodes $i$ and $j$ in the imprinted structure.
Therefore, as anticipated, the links between nodes $i$ and $j$ in the emergent network are non-zero if either $i$ and $j$ are connected in the imprinted network ($\mathcal{A}_{ij}$ = 1) or when they are next-nearest neighbors ($[\mathcal{A}^2]_{ij} \neq 0$).

So emergent networks of photon-number correlations have larger number of links than the imprinted networks. Also, the number of walks of distance two between different nodes in complex networks are larger than in regular structures (like grid shapes). Hence we expect to have a larger number of links  for emergent networks of cluster with complex imprinted network.

We now look at specific features dependent on the different network structures.

\subsection{Barab\'{a}si-Albert networks -- 
Emergent triangles and clustering}
The imprinted BA networks have a multitude of weakly connected nodes that are organized around a few highly connected hubs. 
\begin{figure}
\centering
\includegraphics[width=0.5\textwidth]{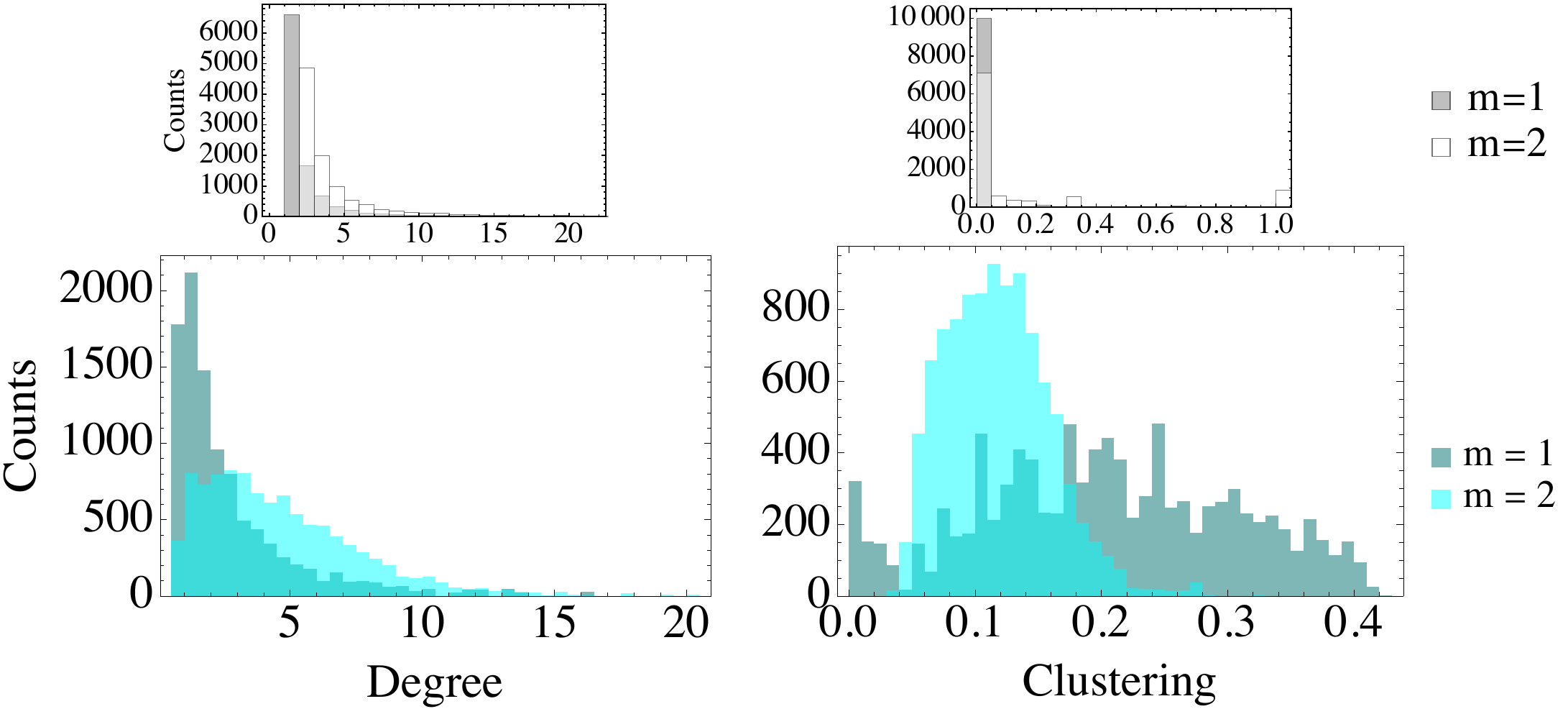}
\caption{\textit{Statistics for imprinted BA networks and emergent networks in the Gaussian case.} Histogram of degree and clustering for the imprinted BA network with $m=1$ and $m=2$ (top row) and associated emergent network of photon number correlations (bottom row) in the Gaussian case, i.e., when no photon is subtracted.}\label{EmGaussianBA}
\end{figure}
We collect statistics of $100$ different networks of $100$ nodes both for the parameter $m=1$ (e.g. top row of Fig.~\ref{networks}) and for $m=2$. 
In the example in Fig.~\ref{networks}, we see that the number of links in the emergent network are larger when compared to the imprinted network, as told above.
A more quantitative understanding is acquired from the histograms of the degree and clustering coefficients in the emergent network of Fig.~\ref{EmGaussianBA} for $m=1,2$ in comparison with the original distribution of the imprinted network. The emergent degree distributions inherit the features of the imprinted network, with a small number of nodes with high degrees, although with larger variances. In contrast, the histogram of clustering is dissimilar to the clustering in the imprinted network. 
The BA network with $m=1$ is an excellent example to illustrate the difference: this imprinted network's tree-like structure combined with the randomness of the BA growth process makes that many nodes have only one connection. In the emergent network, however, all nodes have at least two connections due to what we discussed above, i.e., the presence of walks at distances two in the imprinted network. So we have more triangles  than in the imprinted network. Thus clustering is zero for all nodes while the emergent correlation network has non-zero values quite uniformly distributed but only in the range of $0.0 - 0.4$.

\subsection{Watts-Strogatz networks -- \\ More randomness  for 
lower degree and clustering }

As mentioned in the introduction to complex networks of Sec.~\ref{sec:compNet}, WS networks  have a tuneable degree of randomness. In particular, in the limit of a vanishing rewiring probability we recover a completely regular network, whereas in the limit of high rewiring probability the network closely resembles an ER network.
In our simulations, we start from a regular one-dimensional network with $100$ nodes, each of which is connected to $2k$ other nodes. This network can be represented by organizing the nodes in a circle, where every node is connected up to its $k^\mathrm{th}$ neighbor. Subsequently we rewire the connections with probability $p_{WS}$. For various choices of $p_{WS}$, we implemented $100$ of these WS networks as imprinted structures to apply $C_Z$ gates \footnote{The case where $p_{WS} = 0.05$ forms an exception. Here we consider $74$ realizations.}. As for the BA case we look at the statistics of degree and clustering.

The bottom row of networks in Fig.~\ref{networks} shows a typical realization of a WS network with $k=5$ and $p_{WS}=0.05$. The imprinted network is therefore reasonably close to a regular network in which each node has $2k=10$ connections. We observe that the emergent network before photon subtraction, with a weighted adjacency matrix $\mathbb{A}$, has a richer structure in its connections. Nevertheless, we can still see a qualitative resemblance between the imprinted and the emergent network.
\begin{figure}
\centering
\includegraphics[width=0.5\textwidth]{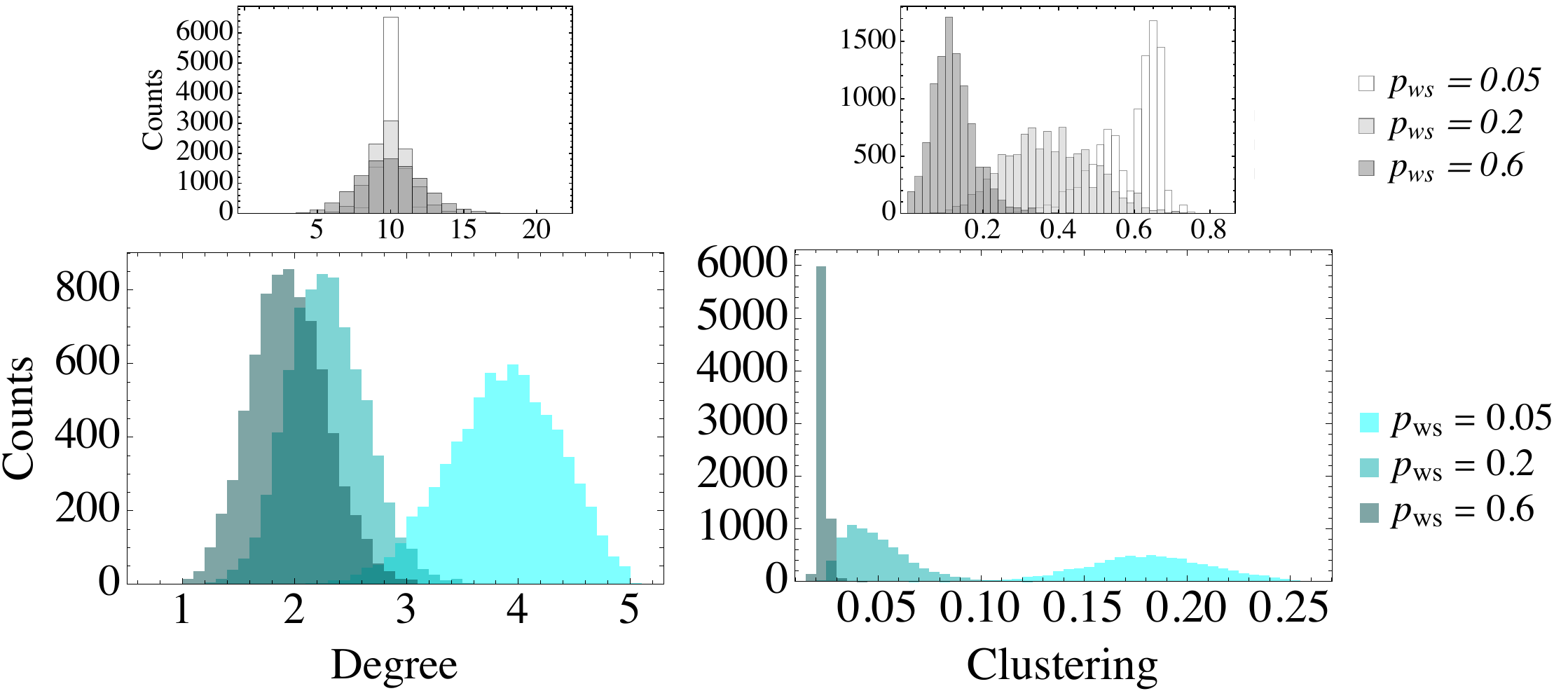}
\caption{\textit{Statistics for imprinted WS networks and emergent networks in the Gaussian case..} Histogram for the degree and clustering for the imprinted networks (top row) and for the emergent network of photon number correlations in the Gaussian case (when no photon is subtracted) (bottom row). Here results on the WS network model with $p_{WS}=0.05, 0.2$ and $0.6$ are reported. }\label{EmGaussianWS}
\end{figure}

\begin{figure*}
\centering
\includegraphics[width=0.99\textwidth]{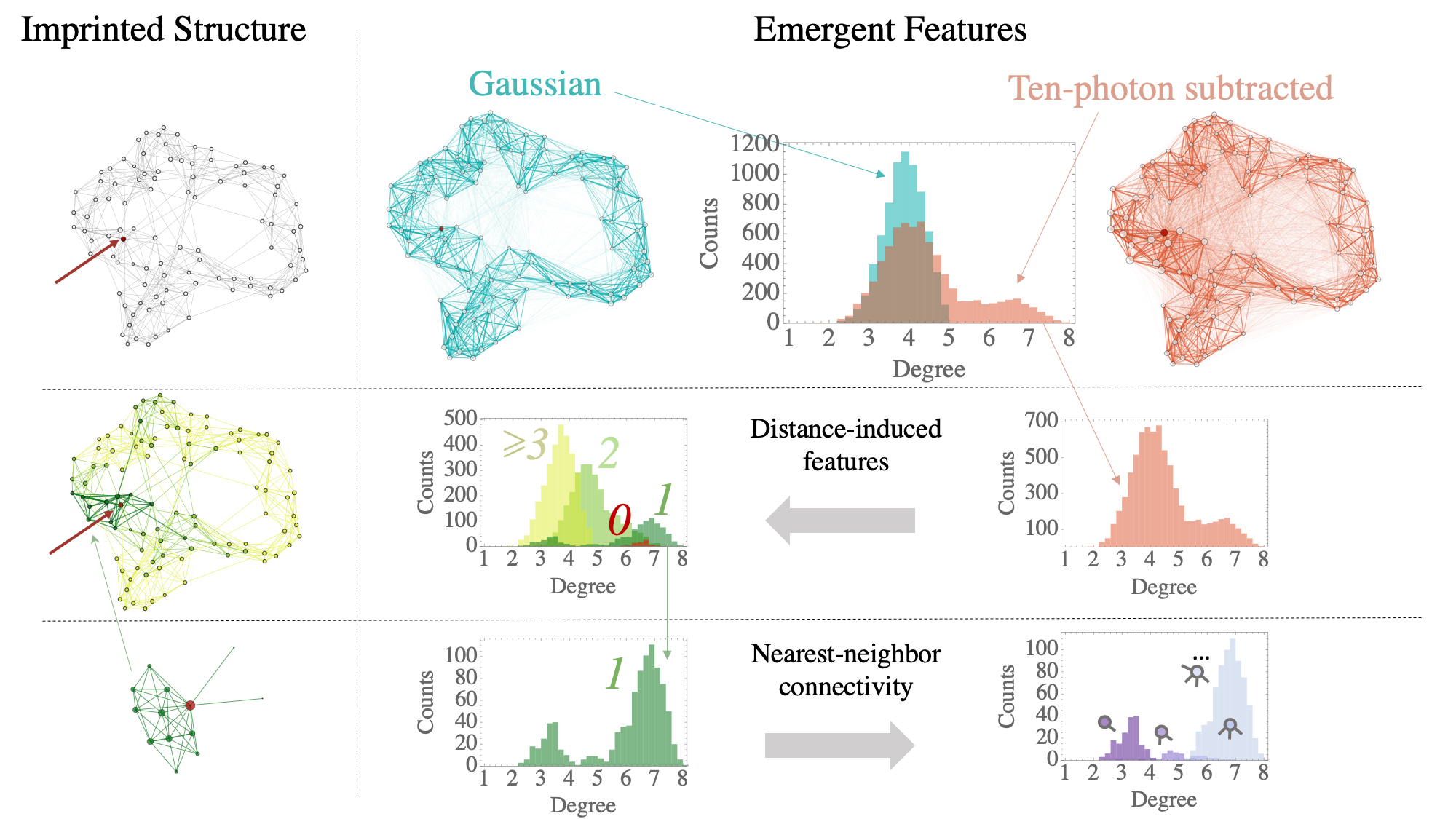}
\caption{\textit{Analysis and structures for an imprinted WS network.}  Generated from a regular one-dimensional network in which every node is connected to its $k=5$ nearest neighbors with a rewiring probability $p_{WS}=0.05$. Top row: imprinted complex network (left, node with highest degree is highlighted in red); and emergent networks before subtraction (Gaussian, middle) and after subtraction (Ten-photon subtracted, right) with their corresponding degree distributions. Middle row: degree distribution of the photon-subtracted case (zoom on right) is broken up, in the central panel, according to the distance ($0$, $1$, $2$, or $\geqslant 3$) of nodes from the subtraction node. The color code used for different distances is adopted for the nodes in the imprinted network (left). Lower row: Structure of the next-neighbor nodes (distance $1$) is highlighted (left); zoom of the degree distribution at distance one (middle); statistics is broken up according to the connectivity of the different nodes (right). \label{overview}}
\end{figure*}

In Fig.~\ref{EmGaussianWS} we examine the difference in degree and clustering coefficient between imprinted and emergent networks. We observe that the properties of the imprinted WS networks strongly influence the structure of the emergent correlation networks.
The degree distribution for the imprinted networks is always centered around $2k=10$ with larger variances for larger $p_{WS}$.
The degree distributions for the emergent networks are centered around different mean values   for the three $p_{WS}$ cases. The $p_{WS} = 0.05$  case shows a broader and more skewed distribution of significantly higher degrees. Hence, for the emergent networks, in contrast to the imprinted ones, the largest variance is for the lowest $p_{WS}$. In general, we conclude that an increased probability of rewiring (and thus more randomness) in the imprinted network decreases the degree (which is essentially the total amount of correlation of every node) in the correlation network of Gaussian clusters.
The  histogram for the clustering coefficient is qualitatively similar to that of the degree, in the sense that increased rewiring leads to a decrease in clustering, and it is also very similar to the clustering of the imprinted networks \footnote{This reduction in the degree and clustering should be explained by a reduction in the weight of the connections because, from (\ref{Cij-Gaus}), we can demonstrate that the number of connection in the emergent network is higher for larger values of $p_{WS}$. This is related to the choice of the normalization given by the denominator in the elements of the correlation matrix, as explained in Sec.~\ref{sec:emer-corr}.}.

\section{The effect of non-Gaussian operations on Emergent Networks}\label{sec:NonGaussianNetworks}
In this section we study the effect of photon subtraction, introduced in Sec.~\ref{sec:nonGaussIntro}, on the emergent network of photon-number correlations. Because the operation is locally applied on a single node in the imprinted network, one might consider this to be a single-node attack, as in classical complex network theory. Moreover in quantum networks, operations like node removal or link shortening have to be introduced in the context of cluster states via Gaussian (homodyne) measurements~\cite{Gu09}.  However, we emphasize that no nodes are removed in the photon subtraction process.  So it is not a single-node attack in the classical sense.  

Here we are not interested in modifying the size of the imprinted cluster by removing nodes, but we instead want to analyze the spreading of non-Gaussian correlations in cluster states when affected by photon subtraction in one node.
Previous results show that  repeated photon subtraction in the same node may increase correlations in the system due to entanglement distillation \cite{PhysRevA.86.012328}. Also   we know that   photon subtraction in a given node creates correlations between previously uncorrelated nodes~\cite{Walschaers17}.

However, there is no general result on how the structure of the  correlation  in the network is influenced by  the topology of the imprinted network.

To address this question, we monitor the effect of subtracting ten photons for the emergent photon-correlation networks. Our procedure is the following: i) we first provide analytical results on the reach of the effect of photon subtraction.  ii)  We then compare the qualitative features that are seen in the histograms of numerically generated distributions of degrees and clustering coefficients. iii) To get a complementary quantitative view, we perform a moment analysis and probe the effect of the non-Gaussian operation on the mean, variance, skewness, and kurtosis. Readers unfamiliar with these quantities can find their definitions in Appendix~\ref{sec:Moments}. The results of this section will guide the analysis of distance -induced structures in the follwing section  \ref{sec:AllDistance}. 
An overview of the path followed in our network analysis can be found in Fig.~\ref{overview}.  In the remainder of the Article we will explain each one of these steps in detail.

The number of photons to be subtracted (ten) is chosen in order to to have a large effect on the emergent network, although we do not find qualitatively different results for somewhat larger or smaller numbers of subtracted photons. However, increasing the amount of squeezing in the initial imprinted network or the number of photon subtractions does quantitatively enhance the observed features. 

\subsection{The effect of photon subtraction is strictly local}\label{sec:local}

The subtraction of a single photon in a cluster state is known to only affect vertices in the vicinity of the node of subtraction \cite{Walschaers18}. In Appendix \ref{App:Analitics}, we extend this understanding to the correlations between observables that are defined on different regions of the system: the correlation $\<\hat X \hat Y\> - \<\hat X\>\< \hat Y\>$ between two observables $\hat X$ and $\hat Y$ can only be influenced by photon subtraction when both observables have a support on modes that are correlated to the mode of photon subtraction. This result applies regardless of the number of photons that are subtracted.

For the networks in this work, we subtract photons in one specific node. In the initial Gaussian state, Eq.~(\ref{eq:V}) shows that this node is correlated to all nodes that are either nearest-neighbours (given by ${\cal A}$) or next-to-nearest neighbours (given by ${\cal A}^2$) in the imprinted network of $C_Z$ gates. We label $S$ the node of photon subtraction and denote the expectation value in the photon subtracted state $\<\dots\>$, whereas $\trho{\dots}$ is the expectation value in the Gaussian cluster state. Our general result of Appendix \ref{App:Analitics} then shows that $\<\hat n_i \hat n_j\> - \<\hat n_i \>\< \hat n_j\> = \trho{\hat n_i \hat n_j} - \trho{\hat n_i}\trho{\hat n_j}$ for all nodes $i$ and $j$ which satisfy the condition that either $\delta_{S,i} = {\cal A}_{S,i} = ({\cal A}^2)_{S,i} = 0$ or $\delta_{S,j} = {\cal A}_{S,j} = ({\cal A}^2)_{S,j} = 0$. More colloquially phrased, photon subtraction in $S$ only alters the value of $\<\hat n_i \hat n_j\> - \<\hat n_i \>\< \hat n_j\> $ if either node $i$ or node $j$ is a nearest or next-to-nearest neighbour of $S$ in the imprinted network.

Looking back at Eq.~(\ref{Cij}), we must also consider the effect of photon subtraction on the denominator on the final emergent network. Let us assume that node $i$ is within the vicinity of $S$, i.e., $\max\{\delta_{S,i}, {\cal A}_{S,i}, ({\cal A}^2)_{S,i}\} \geqslant 1$, but $j$ is further away, i.e., $\delta_{S,j} = {\cal A}_{S,j} = ({\cal A}^2)_{S,j} = 0$. In this case, after photon subtraction in $S$ we obtain the Gaussian value when we calculate $\<\hat n_i \hat n_j\> - \<\hat n_i \>\< \hat n_j\>$. Likewise, we will find that $\<\hat n_j^2\> - \< \hat n_j\>^2$ remains unaffected. However, because $i$ is in the vicinity of $S$, we do find that $\<\hat n_i^2\> - \< \hat n_i\>^2$ changes its value. In other words, photon subtraction will still have an effect on the emergent network as constructed via Eq.~(\ref{Cij}), unless $\<\hat n_i \hat n_j\> - \<\hat n_i \>\< \hat n_j\> = \trho{\hat n_i \hat n_j} - \trho{\hat n_i}\trho{\hat n_j} = 0$. From Eq.~\eqref{Cij-Gaus}, we see that this implies that the effects of photon subtraction \emph{spread up to four steps in the emergent correlation network}.\\

\begin{figure}
\centering
\includegraphics[width=0.49\textwidth]{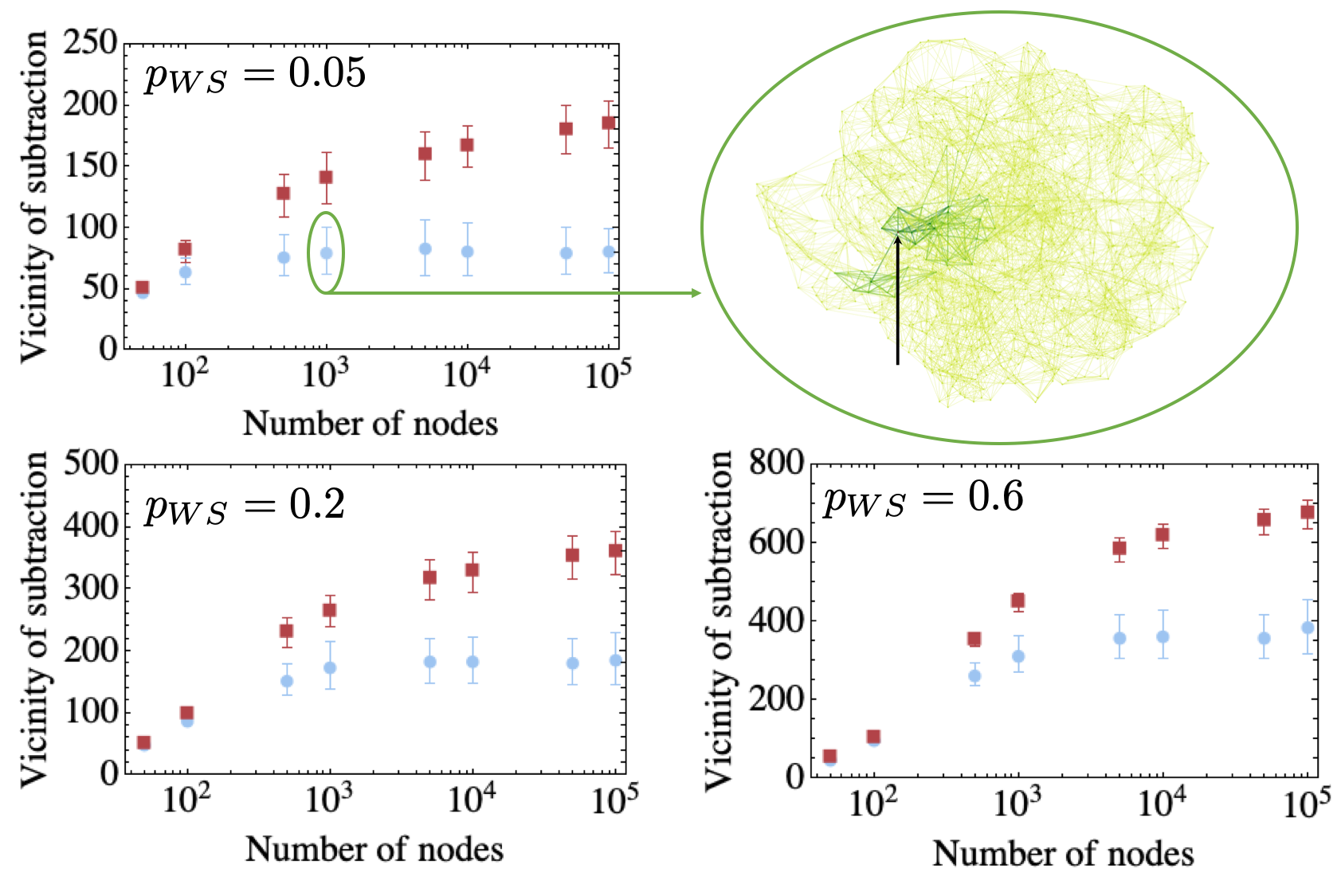}
\caption{The nearest and next-to-nearest neighbours of the photon-subtracted node in imprinted WS network are counted and represented as the {\em vicinity of subtraction}. The number of nodes in this vicinity is shown as a function of the total number of nodes in the network. Every point shows the average of 100 realizations of the network, and the error bar shows the standard deviation around the average. The WS networks are generated from a regular one-dimensional network in which every node is connected to its $k=5$ nearest neighbors with a varying rewiring probability $p_{WS}=0.05$, $p_{WS}=0.2$, and $p_{WS}=0.6$ for the different figures. Each plot shows two possible scenarios: one where the photon is subtracted in a random node (light blue) and one where it is subtracted in the node with the highest connectivity (dark red). Finally, we also show one explicit example of a distance-resolved imprinted network of 1000 nodes, with the vicinity of the photon-subtracted node highlighted in darker green. \label{WSvicinity}}
\end{figure}

We therefore have proven that effects of photon subtraction in such a multimode system can only affect a certain environment around the node of subtraction. Moreover, the number of nodes, in which the effect of photon subtraction is felt, is independent of the number of subtracted photons. Hence, to study the effect of photon subtraction, we can restrict ourselves to intermediate network sizes, that have a large fraction of the nodes that lie in the vicinity of the photon-subtracted node. 

Figs.~\ref{WSvicinity} and \ref{BAvicinity} show how the size of these neighborhoods increases with the size of the network. In almost all cases we consider, we find that this growth either stops or stagnates when the size of the neighborhoods  reaches around 100 nodes. The notable exception is the case of BA networks when we subtract a photon in the most highly connected node. This observation is consistent with the fact that BA networks can have very high connectivities, as is also shown by the power-law statistics in \ref{histo}.

Note that for WS networks the size of the vicinity of the photon-subtracted node could also be changed by increasing or decreasing the connectivity $k$ of the initial regular network that is rewired. However, throughout our text we choose to keep it constant at $k=5$. As such, for the types of networks and the parameter ranges we consider a choice of $\sim 100$ nodes guarantees that most correlations in the system are affected by photon subtraction in a single node.

\begin{figure}
\centering
\includegraphics[width=0.5\textwidth]{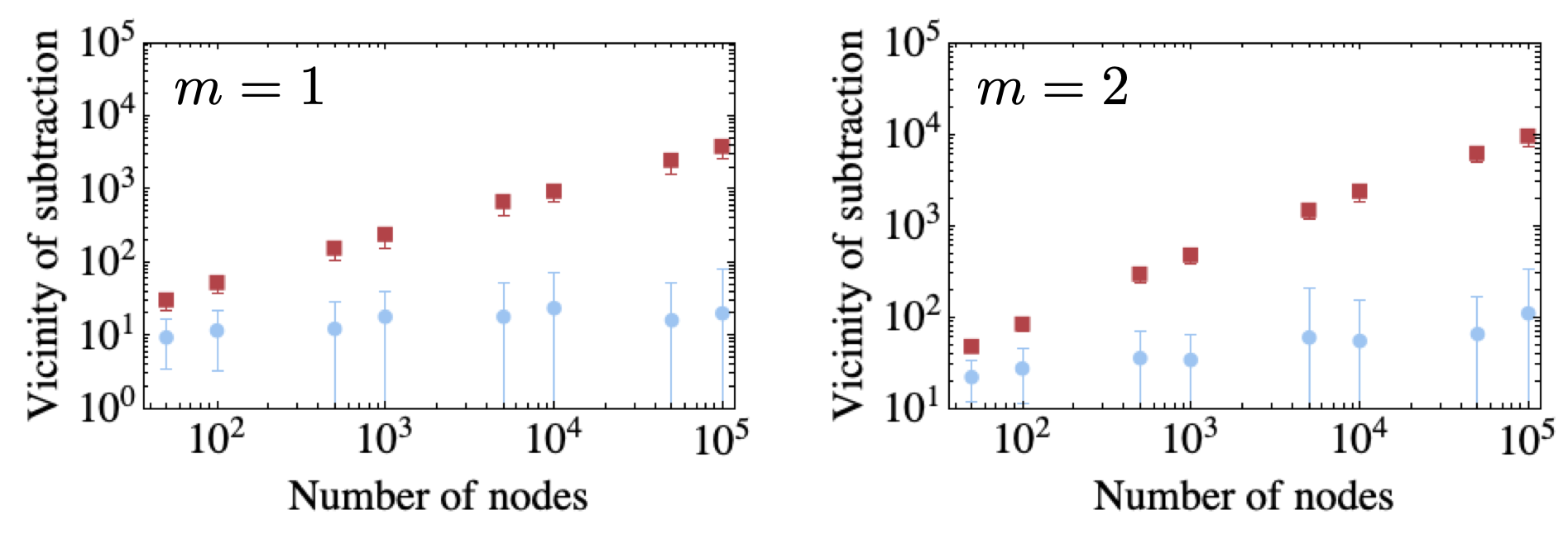}
\caption{The nearest and next-to-nearest neighbours of the photon-subtracted node in imprinted BA network are counted and represented as the {\em vicinity of subtraction}. The number of nodes in this vicinity is shown as a function of the total number of nodes in the network. Every point shows the average of 100 realizations of the network, and the error bar shows the standard deviation around the average. The BA networks are generated for parameters $m=1$ and $m=2$ in the different panels. Each plot shows two possible scenarios: one where the photon is subtracted in a random node (light blue), and one where it is subtracted in the node with the highest connectivity (dark red). \label{BAvicinity}}
\end{figure}

As we will show, the methods of complex network theory offer a new window to understand how photon subtraction influences the correlations in the relevant range.\\

We also arrive at another important conclusion: to induce non-Gaussian effects in vast cluster states, one must subtract photons in many different nodes. However, in Appendix \ref{compSP}, we argue how the complexity of this problem effectively makes it computationally hard to simulate. From a physical point of view, one would subtract these photons by coupling a tiny amount of light from the subtracting nodes,  into an auxiliary mode. Then we need photon detectors on these auxiliary modes to fire at the same time.  We can notice the connections to Gaussian boson sampling \cite{PhysRevLett.119.170501,PhysRevA.99.053816}, where it is shown that simulating the clicks of photon detectors mounted on a sufficiently complicated Gaussian states is computationally intractable. Similarly, there is also a direct connection to the hardness of sampling continuous variables on a photon subtracted state \cite{PhysRevA.96.062307}.

On a mathematical level, the problem at the basis of the computational complexity of these sampling problems is  finding perfect matchings \cite{deshpande2021quantum}. As we argue in detail in Appendix \ref{compSP}, the problem of finding all perfect matchings also appears when constructing the emergent network of photon-number correlations. Hence, fully simulating such networks in detail is only possible when many photons are subtracted in many modes.

Yet, when large states with photons subtracted in various modes are created in experiments, the measurement and analysis of emergent correlation networks may well turn out to be an important tool to characterise such states. Note that the same type of photon-number correlations have been used to benchmark computationally intractable Gaussian boson sampling experiments \cite{Zhong1460,zhong2021phaseprogrammable}.

\begin{figure*}
\includegraphics[width=0.99\textwidth]{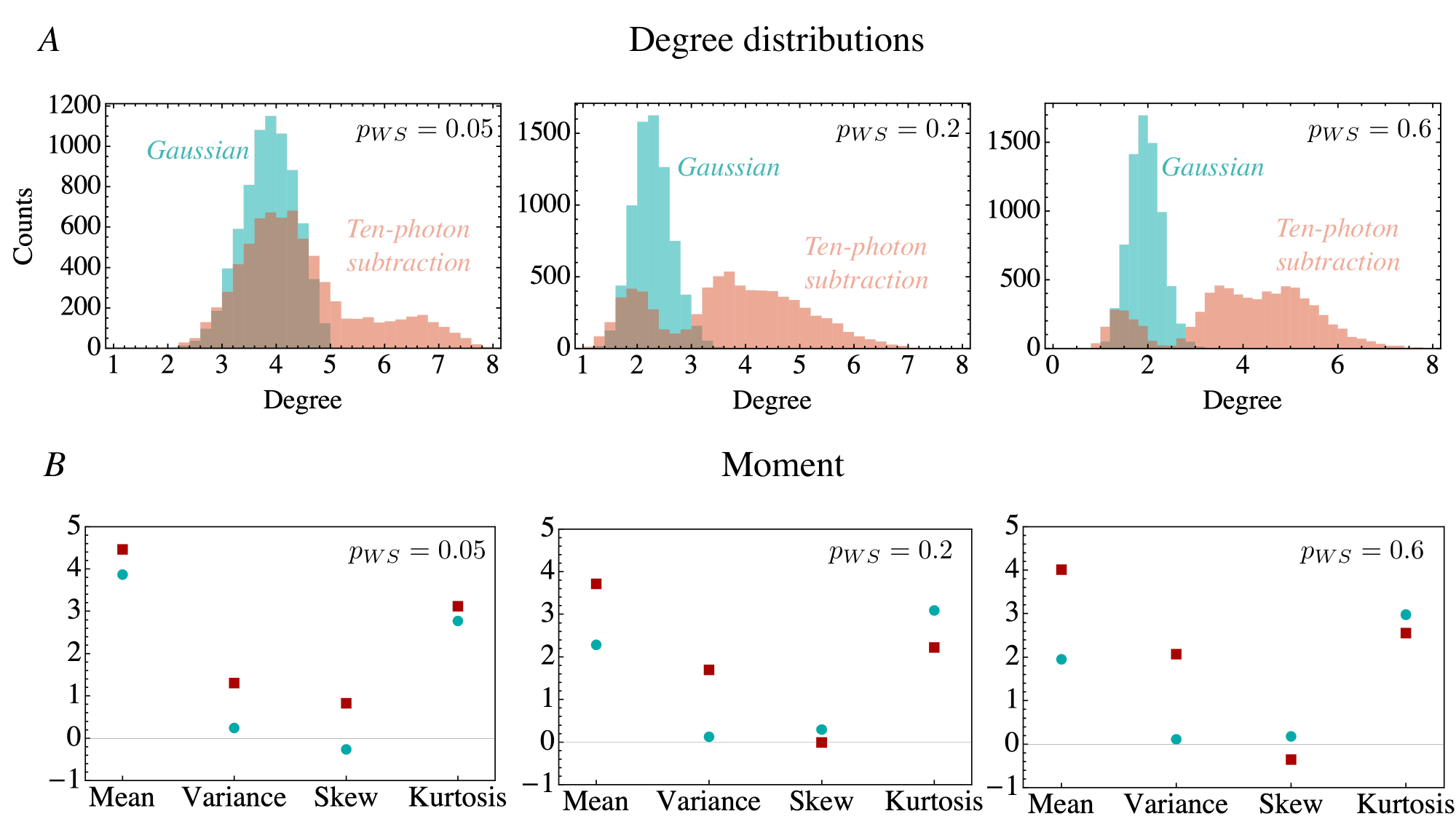}
\caption{Histograms (A) and moments (B) for the degree distributions of the emergent correlation network obtained from a WS imprinted structure with rewiring probabilities $p_{WS}=0.05,0.2$ and $0.6$. Colors indicate degree data prior to (cyan, dots) and after (red, squares) the subtraction of ten photons in the node with the highest connectivity. Data were each obtained by combining $74$ random realizations of a $100$-node network. The WS imprinted network is obtained by starting from a one-dimensional regular network where each node is connected to its $k=5$ nearest neighbors. Both moments and histograms show how photon subtraction changes the bulk of the distribution by increasing the mean degree and the width of the distribution (i.e., variance). The higher moments and histograms also show that the finer structure in the tails of the degree distribution depends strongly on the value of $p_{WS}$. \label{PhotonSub-Degree-WS}}
\end{figure*}

\subsection{Photon subtraction in Watts-Strogatz networks -- More randomness for larger effects } 
\label{ssec:ws}
Fig.~\ref{WSvicinity} suggests that we can maximize the effect of photon subtraction by subtracting the photons in the node with the highest connectivity in the imprinted network, i.e., the biggest hub. As such we probe network environments in the imprinted WS structure with the highest correlations. For the considered network size of 100 nodes, this choice has a small effect in the case of WS networks as most nodes have a similar connectivity, unlike BA networks where a few nodes serve as highly connected hubs.\\ 
In Fig.~\ref{PhotonSub-Degree-WS} we choose rewiring probabilities $p_{WS}=0.05,0.2,0.6$, as for the Gaussian case, to probe the effect of different imprinted network environments on the degree distribution in the emergent network of photon-number correlations. The data for each value of $p_{WS}$ are obtained by combining $74$ random realizations of a $100$-node network. 
The effect of photon subtraction is qualitatively similar in all cases.  A subset of nodes in the photon-subtracted cluster states retains degrees of the same order of magnitude as for the Gaussian network state, whereas a second subset finds its degree considerably increased, resulting in a bimodal distribution. This qualitative similarity translates to the moments in Fig.~\ref{PhotonSub-Degree-WS}(B), in the sense that photon subtraction shifts the distributions to higher means and variances, regardless of the value  of $p_{WS}$. However, photon subtraction causes stronger increases in the mean and variance for larger values of $p_{WS}$, and the higher moments behave differently depending on $p_{WS}$. These features are observed in Fig.~\ref{PhotonSub-Degree-WS}, where an increase in $p_{WS}$ lowers the overlaps between the histogram before and after photon subtraction.

\begin{figure*}
\includegraphics[width=0.99\textwidth]{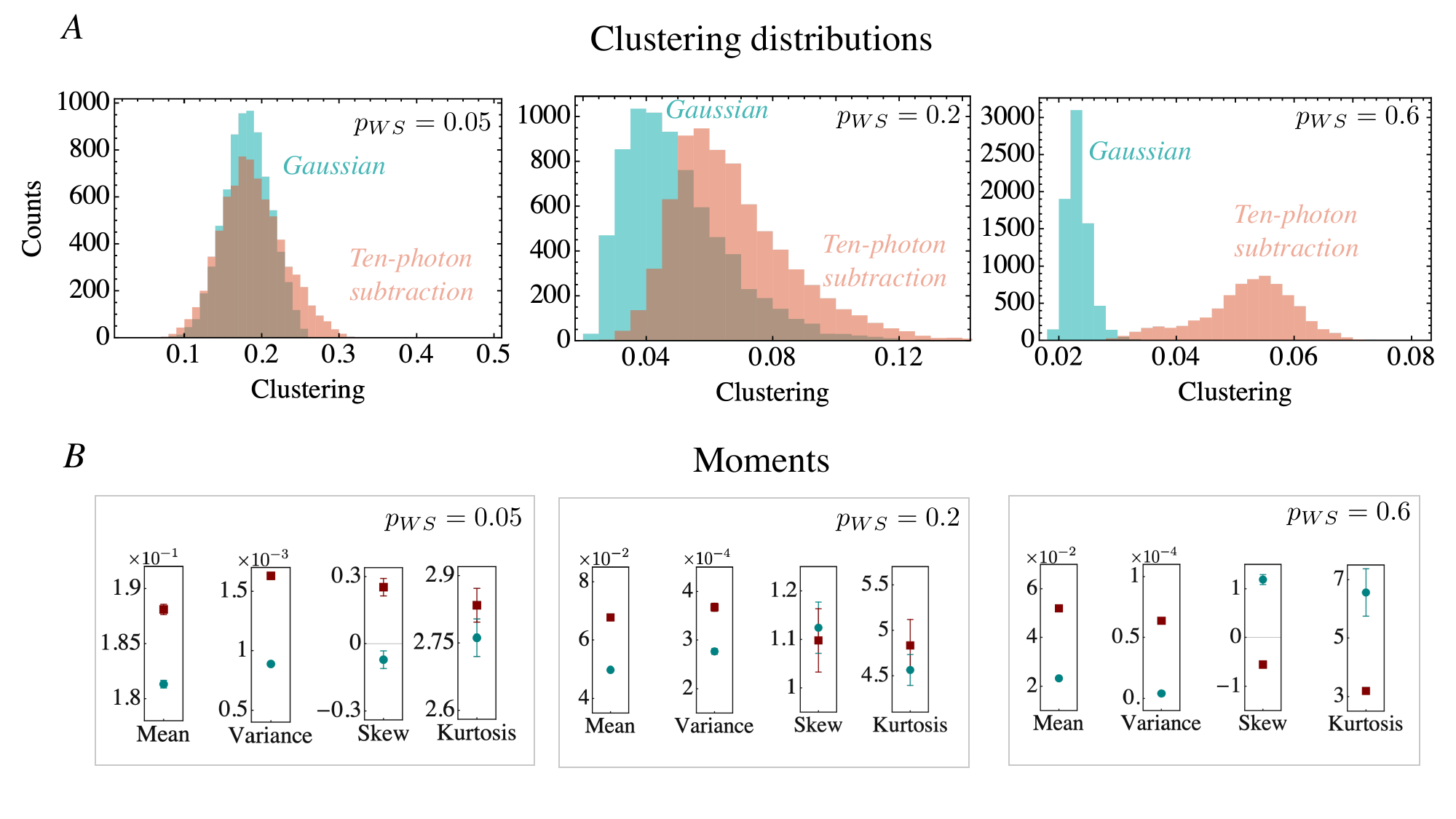}
\caption{Histograms (A) and moments (B) for the clustering distributions for the same networks as Fig.~\ref{PhotonSub-Degree-WS}. Colors indicate degree data prior to (cyan, dots) and after (red, squares) the subtraction of ten photons in the node with the highest connectivity. Photon subtraction shifts and widens the distribution, as shown by the histrograms and quantified by the mean and variance. The higher moments and histrograms indicate that the finer structure of these distributions depend strongly on the value of $p_{WS}$. \label{PhotonSub-Cluster-WS}}
\end{figure*}

In Fig.~\ref{PhotonSub-Cluster-WS}, we explore the role of photon subtraction on the clustering coefficients. The observed difference between different values of $p_{WS}$ is even more profound: the clustering coefficients are only weakly affected by photon subtraction for $p_{WS}=0.05$, whereas for $p_{WS}=0.6$ the histogram changes dramatically. These drastic changes are also seen when comparing the moments before and after photon subtraction in Fig.~\ref{PhotonSub-Cluster-WS} B, where photon subtraction increases the skewness and kurtosis for $p_{WS}=0.05$, but strongly decreases these moments for $p_{WS}=0.6$. Nevertheless, even though the clustering coefficients are not strongly affected by photon subtraction in imprinted WS structures with $p_{WS} = 0.05$, these clustering coefficients remain much higher than those of the imprinted networks with higher values of $p_{WS}$ (which one can also confirm in the moments).

These observations coincide with the intuition that photon subtraction generally increases the correlations in our system. However, it remains to understand which features of the network structure associated with the different values of $p_{WS}$ determine the extent of the effect of photon subtraction. \\


In Fig.~\ref{ScalingWS} we compare the effect of photon subtraction on WS networks with $k = 5$ and $p_{WS} = 0.05$ for different network sizes. In this particular case, we compare the histograms for the degrees and clustering coefficients obtained from 100 networks of 100 nodes to those for 10 networks of 1000 nodes. To limit computational times, the node for photon subtraction was chosen randomly (rather than the one with the highest connectivity). As can be deduced from Fig.~\ref{WSvicinity}, the fraction of nodes in the network that is in the vicinity of the photon-subtracted node decreases with the size of the network. This leads to a sort of dilution of the effect of photon subtraction when we study global properties of the emergent correlation networks. This is clearly seen, both in the degree distribution and in the distribution of clustering coefficients. 

The behaviour in Fig.~\ref{ScalingWS} is a manifestation of the limited size of the vicinity of the photon-subtracted node as compared to the global network. From Fig.~\ref{WSvicinity}, we see that this behaviour can be be observed in all considered cases. As long as we subtract all photons in the same mode, there is not much to be gained from increasing the system size beyond 100 nodes. Instead, it is much more interesting to try and understand the local properties of the emergent networks, i.e., what is happening in the vicinity of the photon-subtracted node. This analysis will be presented in Section \ref{sec:AllDistance}. First, however, we will explore how the characteristics of emergent correlation networks are globally affected for BA imprinted structures, where the imprinted networks have a very different degree distribution.

\begin{figure}
\includegraphics[width=0.49\textwidth]{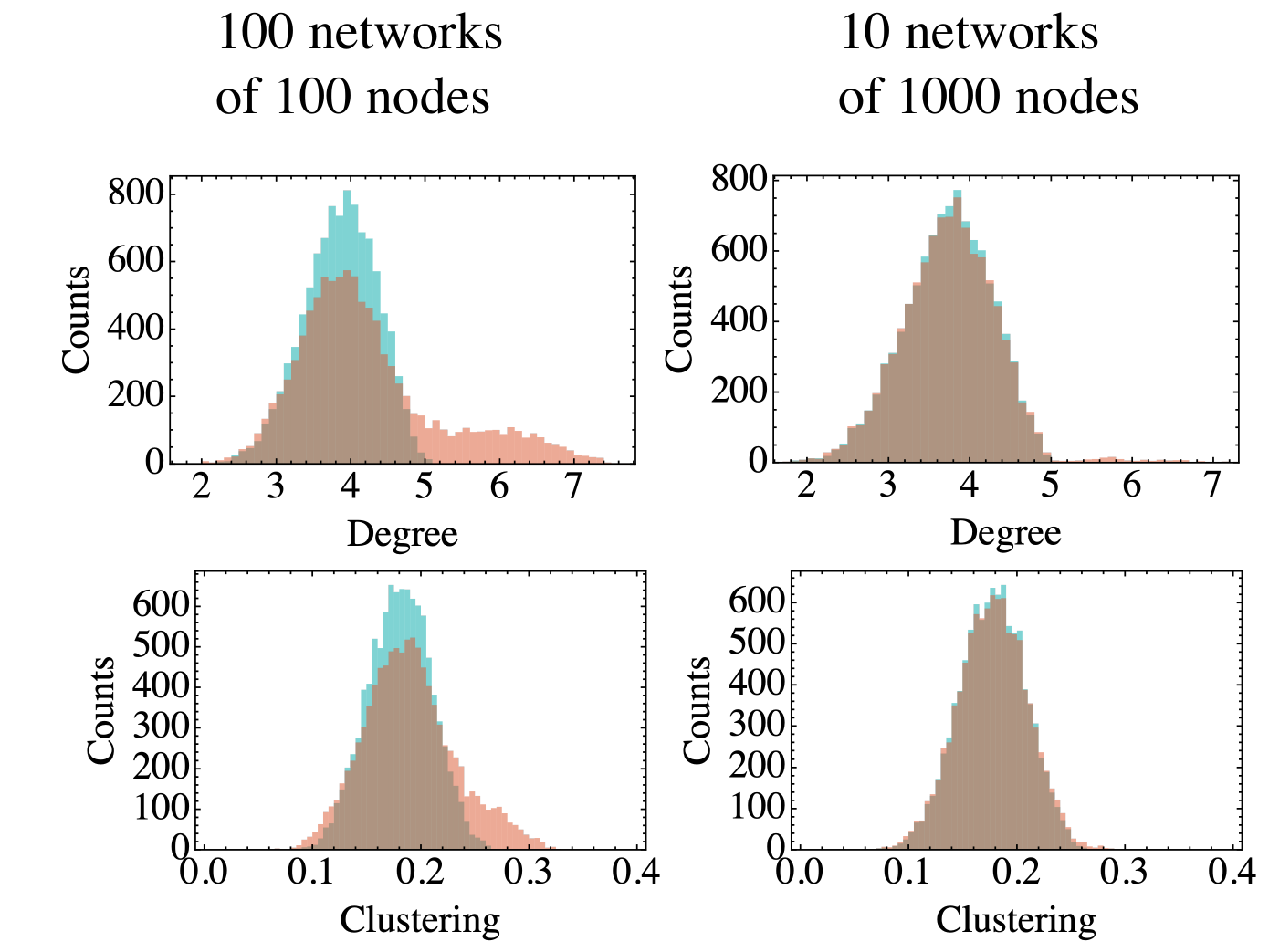}
\caption{Histograms of the degrees (top) and clustering coefficients (bottom) for the emergent correlation networks, after subtraction of ten photons, for 100 WS networks of 100 nodes (left) compared to 10 WS networks of 1000 nodes (right). The WS imprinted networks are obtained from a one-dimensional regular network where each node is connected to its $k=5$ nearest neighbors with a rewiring probability $p_{WS} = 0.05$. For each realisation, the ten photons are subtracted in one randomly chosen node.  \label{ScalingWS}}
\end{figure}

\subsection{Photon subtraction in Barab\'{a}si-Albert networks --Difference between random and highly connected subtraction node } 
\label{ssec:ba}

We now explore how photon subtraction affects the emergent network of a BA imprinted structure, both when we subtract always from the most important hub (i.e., the node with the highest connectivity in the imprinted network), and when we subtract in a randomly chosen node (likely a node with low connectivity).

\begin{figure}
\includegraphics[width=0.49\textwidth]{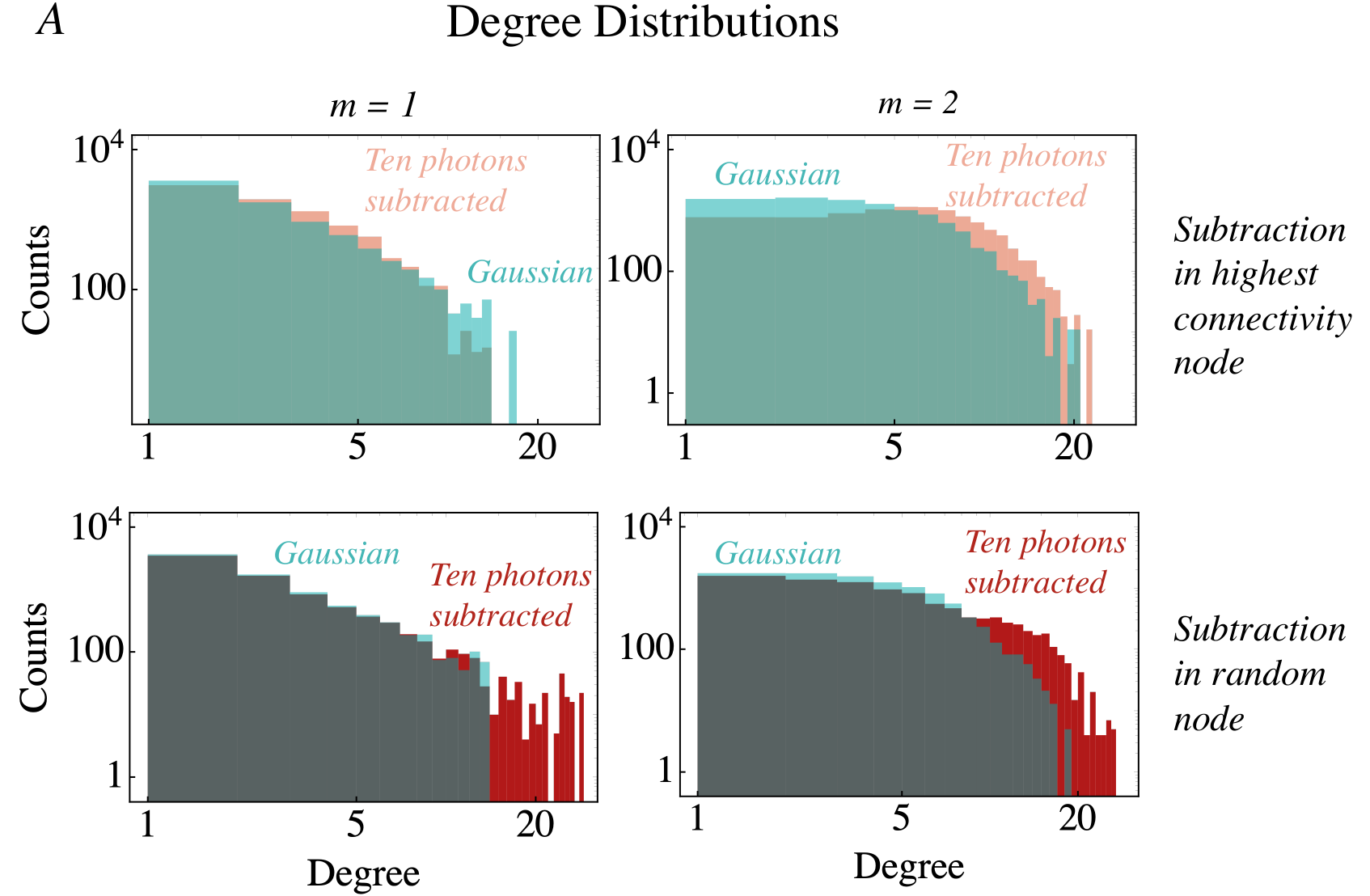}
\includegraphics[width=0.49\textwidth]{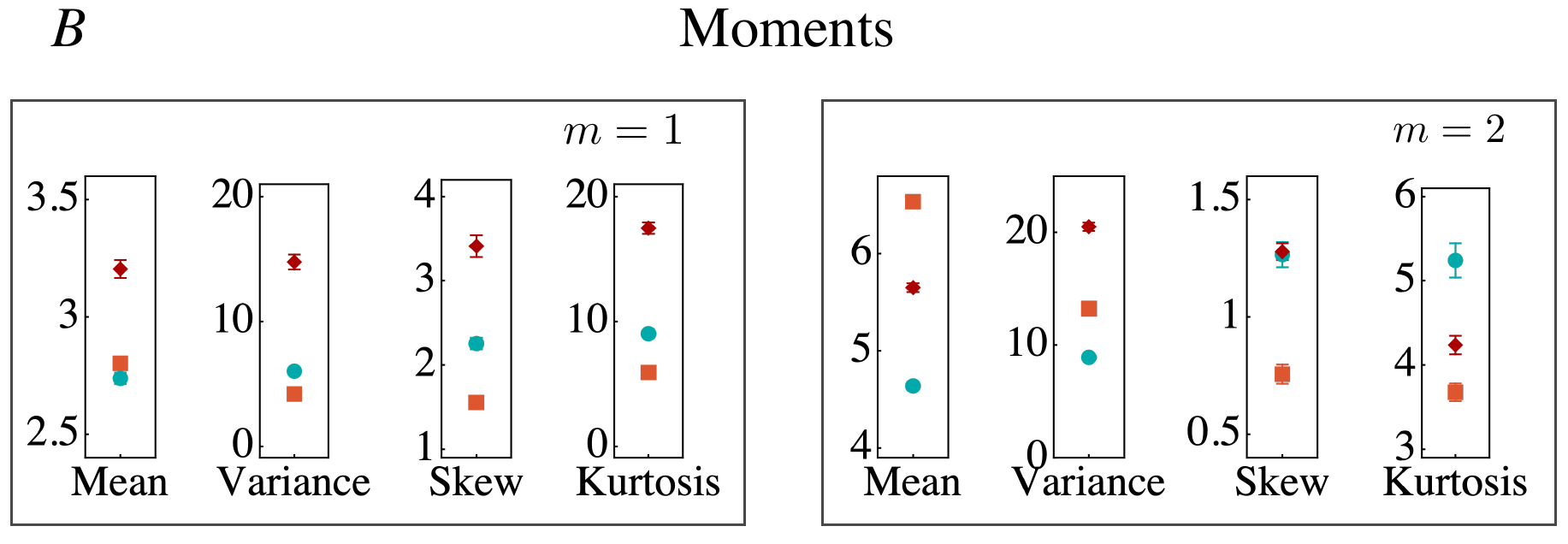}
\caption{Logarithmically-scaled histograms (A) and moments (B) for the degree distributions of the emergent correlation network obtained from a BA imprinted structure for networks generated with $m=1,2$. Colors indicate degree data prior to (cyan, dots) and after the subtraction of ten photons in the node with the highest connectivity (light red squares), or in a randomly chosen (dark red diamonds) node. These histograms were each obtained by combining $100$ random realizations of a $100$-node network. Photon subtraction mainly affects the tails of the distribution as seen in the histograms and reflected in the variance and kurtosis. The emergent networks for $m=1$ imprinted structures show power-law behaviour, which is reflected by high values of the kurtosis.\label{PhotonSub-Degree-BA}}
\end{figure}

Even before photon subtraction the moment analysis in Fig.~\ref{PhotonSub-Degree-BA}(B) shows that for imprinted BA structures the degree distributions of emergent correlation networks have a large variance and kurtosis, in particular for $m=1$. Hence, the emergent networks inherit some of the power-law features of the imprinted structures. In the top panels of Fig.~\ref{PhotonSub-Degree-BA} we therefore show the degree distribution on a log-log scale, for $m=1$ and $m=2$, before and after subtraction of ten photons. 

For $m=1$ imprinted structures, the effect of photon subtraction manifests within the tail of the distribution. We observe the power-law behaviour that is suggested by the moments, and we find that photon subtraction in a hub tends to reduce the weight in the tail. Thus, photon subtraction in the most important hub has a reasonably small effect on a large fraction of the network to make the degrees somewhat more homogeneous. In contrast, photon subtraction increases the weight in the tail if it occurs in a randomly picked node. This shows that, when the photons are subtracted in a node that is correlated to only a small number of other nodes, it can very significantly increase these correlations, thus causing larger values to appear in the tails. This behaviour is consistent with photon subtraction as a finite resource for entanglement distillation. Yet, it must me stressed that photon-number correlation are not necessarily quantum correlations. For nodes with a high connectivity, photon subtraction only weakly alters the individual correlations. As a final comment for the $m=1$ case, we must note that the bulk of the distribution remains largely unaffected, up to a point where the effect of photon subtraction is hardly visible when the histogram is plotted on a linear scale -- this is also reflected by a relatively small change in the mean degree.

For $m=2$ imprinted structures, the distribution does not show typical power-law behaviour, which is reflected in smaller values of kurtosis in Fig.~\ref{PhotonSub-Degree-BA}(B). These moments, nevertheless, show a profound change in the variance due to photon subtraction, which implies an overall widening of the distribution. Figure~\ref{PhotonSub-Degree-BA}(A) shows this feature, as now a larger fraction of the distribution grows to higher values of the degree. Hence, for $m=2$ we can conclude that photon subtraction predominantly affects the bulk of the distribution, which is qualitatively similar to what we saw for WS distributions.

\begin{figure}
\includegraphics[width=0.49\textwidth]{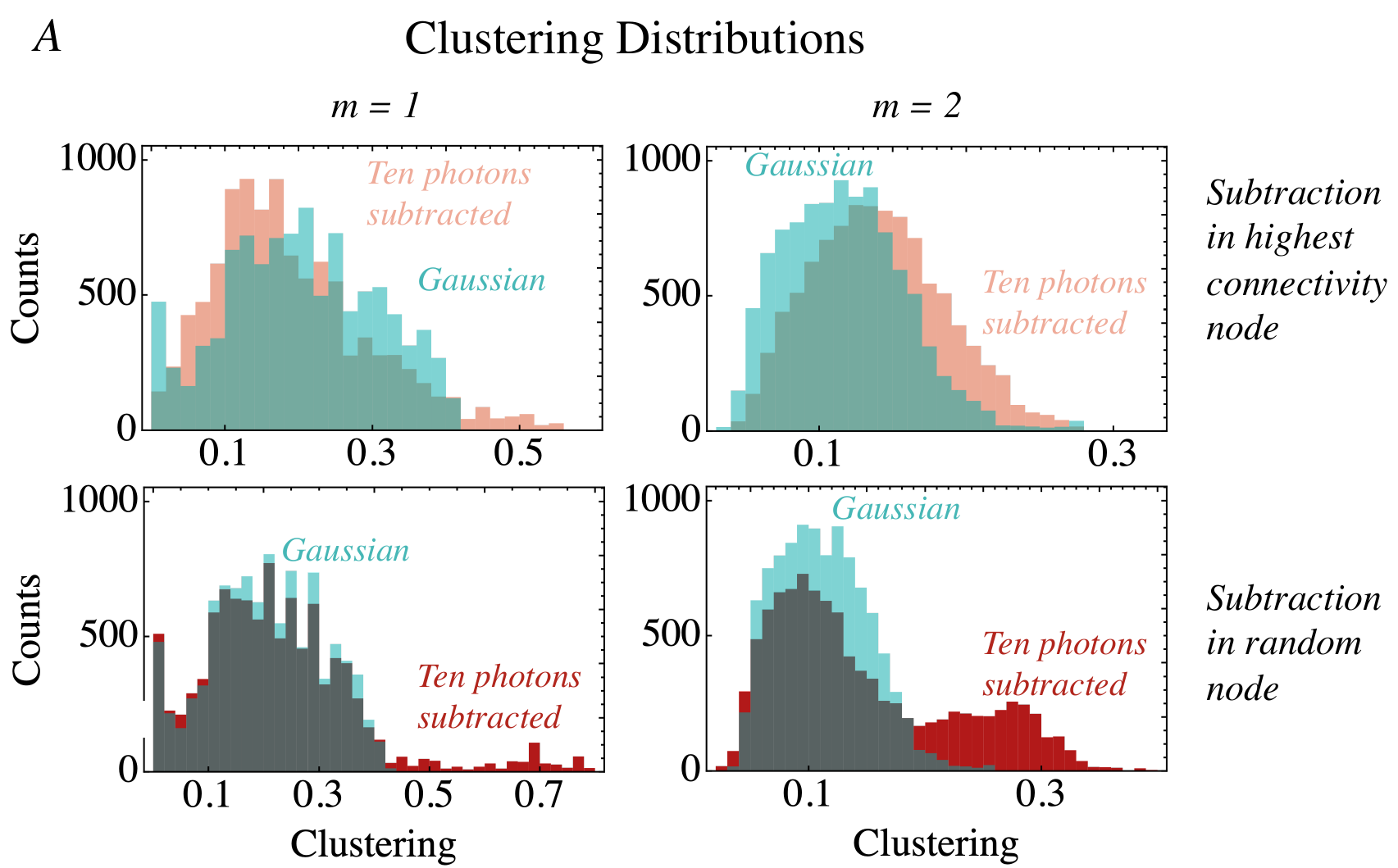}
\includegraphics[width=0.49\textwidth]{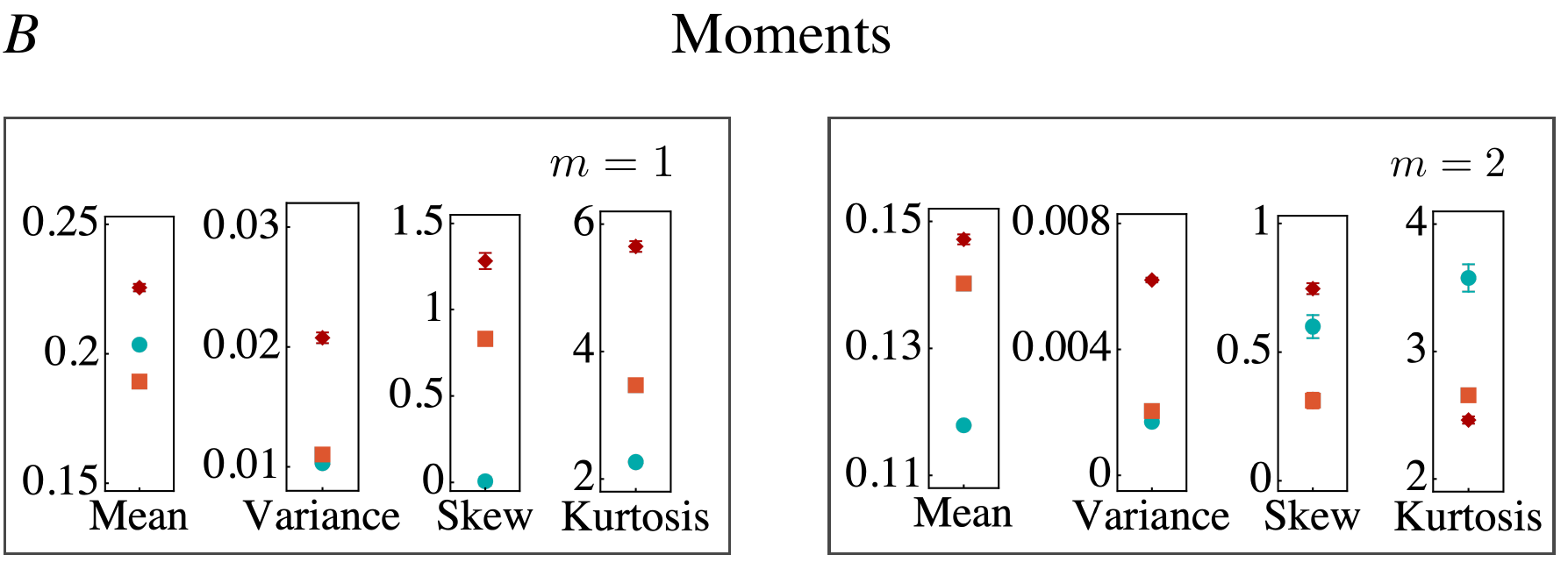}
\caption{Histograms (A) and moments (B) for the clustering distributions of the same networks as Fig.~\ref{PhotonSub-Degree-BA}, and the same color coding. Photon subtractions increase the tails of the distributions. Photon subtraction in a random node can create high clustering coefficients for a reasonably small number of modes.\label{PhotonSub-Cluster-BA}}
\end{figure}

In Fig.~\ref{PhotonSub-Cluster-BA} we observe that for $m=1$ the clustering coefficients in these networks can be increased up to $Cl=0.8$, though only for a small fraction of nodes. In other words, photon subtraction, again, predominantly affects the tails of the distribution for $m=1$ (which is confirmed by the moment analysis in Panel B of Fig.~\ref{PhotonSub-Cluster-BA}). 
Therefore  
random tree networks (i.e., BA with $m=1$) globally seem to be the most resilient networks to local photon subtraction operations, even though photon subtraction in nodes with few links can cause profound local changes in the correlations. For $m=2$ we again see a larger overall impact of photon subtraction, leading to more significant changes to the bulk of the distribution. This, too, is in line with the degree statistics.

These results suggest that the environment of the subtracted node in the imprinted network plays an important role in how the emergent network reacts to photon subtraction. To unravel this interplay between the imprinted structure and the emergent network, we will investigate the behaviour of nodes depending on their distance (in the imprinted network) to the node of photon subtraction.

Note that the power-law degree distribution of the imprinted networks can lead to very large vicinities of the photon-subtracted nodes (in particular when we subtract photons in a hub). This means that the number of non-Gaussian correlations in the networks can rapidly grow, as shown in Fig.~\ref{BAvicinity}. In our present implementation of the code to simulate the correlation networks \cite{Code}, this makes BA networks beyond 100 nodes numerically very challenging to treat.

\begin{figure*}[t]
\includegraphics[width=0.95\textwidth]{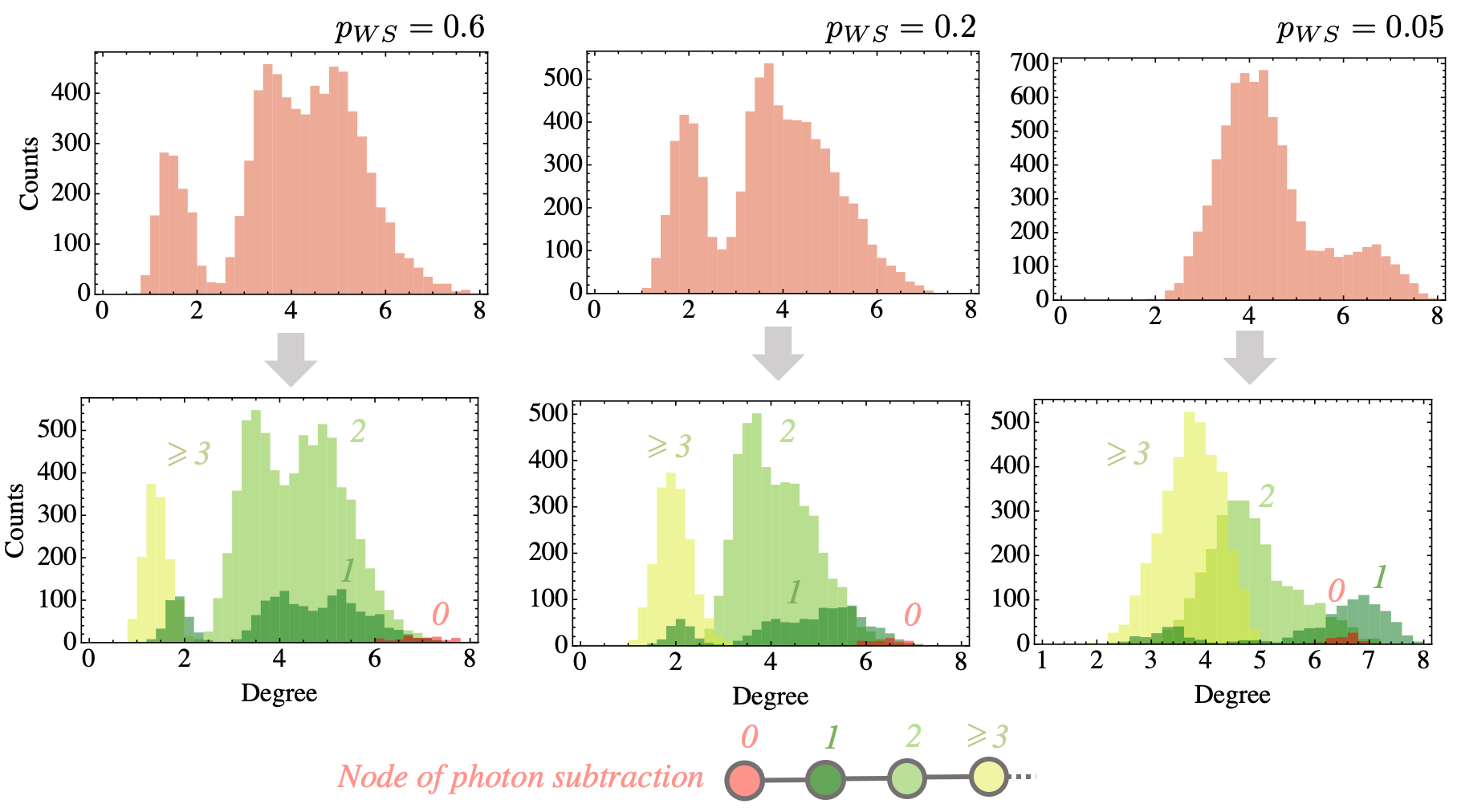}
\caption{Degree distribution of the emergent network of imprinted WS networks after the subtraction of ten photons, using the same data as Fig.~\ref{PhotonSub-Degree-WS}. Complete histograms (top) were each obtained by combining all the nodes of $74$ random realizations of a $100$-node network (see also Fig.~\ref{PhotonSub-Degree-WS}). Distance-resolved histograms (bottom) are obtained by grouping nodes based on their network distance (in the imprinted network) to the node of photon subtraction. Network distance is indicated by color code and labeled by a number (zero being the node of subtraction). Photon subtraction shifts the degree distribution to higher values for the nodes of photon subtraction (distance \textit{0}) and those at distance \textit{2}. At distance \textit{3} and beyond, the effects are negligible. At distance \textit{1} the distribution is affected in a non-trivial way depending on $p_{WS}$. The value of $p_{WS}$ also influences the relative importance of distance-induced features, e.g., for $p_{WS}=0.05$ we find a larger fraction of nodes at distance \textit{3} or beyond. \label{Distance-Degree-WS}}
\end{figure*}

\section{Imprinted structure guides non-Gaussian effects} \label{sec:AllDistance}

The results in Section \ref{sec:local} and the more general theorem presented in Appendix \ref{App:Analitics} show that the study of the non-Gaussian correlations induced by photon subtraction is actually a study of sub-networks rather than the study of the global state, as it was shown in Fig.~\ref{ScalingWS}. In this section we go beyond the simple separation of affected (i.e. the vicinity of the photon-subtracted node) and unaffected nodes, and explore more detailed sub-structures of the networks.\\  
In Subsection \ref{sec:Distance}, we first explore how distances for the photon-subtracted node in the imprinted network have an effect on the emergent correlations. This will notably highlight a different behaviour for nearest and next-to-nearest neighbours. In Subsection \ref{sec:nearestneighbor}, we will then explore in detail how the structure nearest-neighbour sub-networks of the imprinted networks have a profound influence of the non-Gaussian effects that manifest in the photon-number correlations.

\subsection{Distance-induced structure}\label{sec:Distance}
In Sec.~\ref{sec:NonGaussianNetworks}, we showed that photon subtraction induces additional structure in the emergent network. 
Here, we take the first step toward understanding how the emergent structure in photon-number correlations is influenced by the imprinted structure. We break up the statistics according to the imprinted distance between the node in which the photons were subtracted and the nodes under consideration. This distance between nodes is here understood to be the number of connections in the shortest path that connects the nodes in the imprinted structure. \\ 
In Sec.~\ref{sec:local} we emphasised that the quantity $\<\hat n_i \hat n_j\> - \<\hat n_i \>\< \hat n_j\>$ is only altered by photon subtraction when nodes $i$ and $j$ are both in the vicinity of the point of photon subtraction. When at least one of the vertices lies beyond, the features of its emergent correlations are only impacted via the denominator in Eq. \eqref{Cij}. For the degree statistics, this means that nodes at distances zero (point of subtraction), one (nearest neighbours), and two (next-to-nearest neighbours) are very differently affected by photon subtraction than the remaining nodes. This motivates the choice to separate the nodes into four groups: the nodes where the photons are subtracted (distance \textit{0}); their nearest neighbors (distance \textit{1}); the next-nearest neighbors (distance \textit{2}); and all the remaining nodes (distance \textit{3} or more).

\begin{figure}
\includegraphics[width=0.48\textwidth]{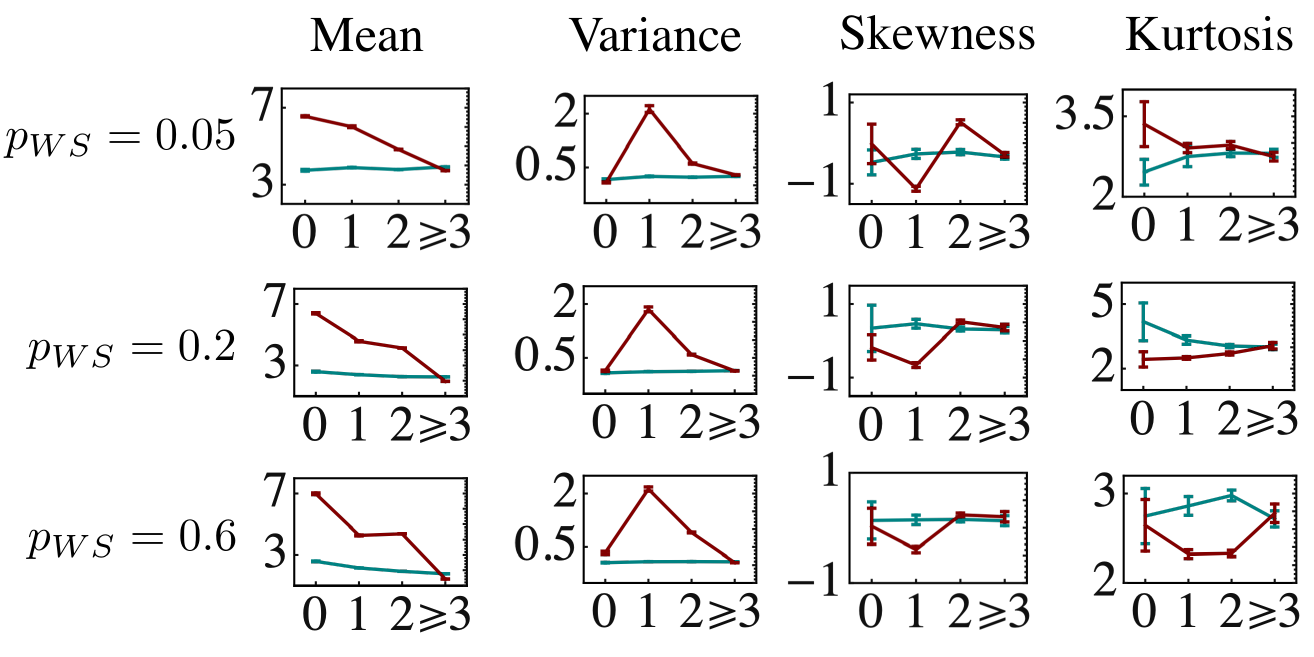}
\caption{Distance-resolved moments for the degree statistics in the emergent photon-number correlation network, resulting from imprinted WS networks with rewiring probabilities $p_{WS}=0.05$, $p_{WS}=0.2$, and $p_{WS}=0.6$. Gaussian states (cyan) and ten-photon subtracted states (red). Photon subtraction mean and variance are affected in the same way for all networks, showing that photon subtraction has the global tendency of increasing the degree and widening the distribution of nodes up to distance \textit{2}. The effect on higher moments depends on the value of $p_{WS}$, showing that photon subtraction also affects the fine structure of the degree distribution in a more subtle way that depends on the network topology. At distance \textit{3} and beyond we see no effect. \label{Distance-Moments-WS}}
\end{figure}

\begin{figure}
\includegraphics[width=0.48\textwidth]{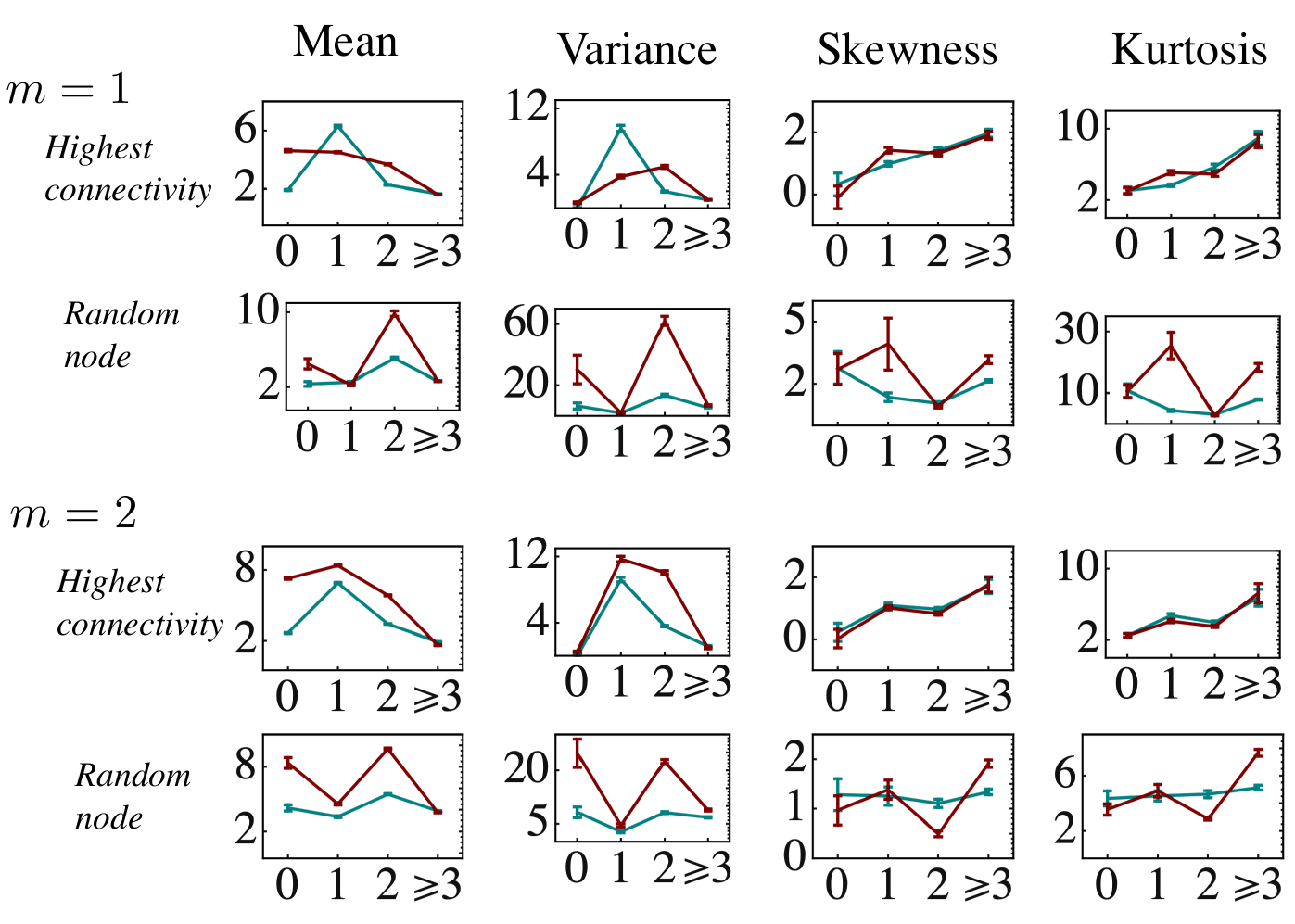}
\caption{Distance-resolved moments for the degree statistics in the emergent photon-number correlation network, resulting from imprinted BA networks with parameters $m=1$ and $m=2$. Gaussian states (cyan) and ten-photon subtracted states (red). The same observations hold as for the WS networks in Fig.~\ref{Distance-Moments-WS}, except for the distance \textit{1} nodes in imprinted structures with $m=1$. The latter is explained in detail in Sec.~\ref{sec:nearestneighbor}.  \label{Distance-Moments-BA}}
\end{figure}

As an example of such a distance analysis, in Fig.~\ref{Distance-Degree-WS} we show four histograms corresponding to our four chosen groups of nodes. A complementary quantitative view can be obtained by studying the moments of these distance-resolved histograms, as shown in the moments of the degree distribution in Fig.~\ref{Distance-Moments-WS} for imprinted WS structures and in Fig.~\ref{Distance-Moments-BA} for imprinted BA network. A completely analogous analysis can be carried out for the clustering coefficients.

The moments in Figs.~\ref{Distance-Moments-WS} and~\ref{Distance-Moments-BA} provide a range of important insights. First, we find that degree distribution of nodes that lie beyond the next-nearest neighbors ($\geqslant 3$) are generally unaltered by photon subtraction. A notable exception is found for the imprinted BA network with photon subtraction in a random node, where the higher moments, i.e., skewness and kurtosis, for these nodes are influenced. This is consistent with the idea that, for an imprinted BA network with photon subtraction in a random node, the non-Gaussian effects are confined to a smaller number of nodes, which in turn change more drastically. 

As a second observation, we find that the distance-dependent effects in the skewness and kurtosis depend strongly on the specific network-type and chosen parameters, in contrast to the mean and variance. This implies that photon subtraction induces some general effects on the bulk of degree distributions (as comprised by the first two moments), while the effect on the finer structure (as comprised by the higher moments) of the degree distributions depends more strongly on the precise topology of the imprinted networks. 

As an important general effect, we find that both for the nodes in which photons are subtracted (\textit{0}) and their next-nearest neighbors (\textit{2}) the mean and variance of the degree distribution always increase. The behaviour of the nearest neighbors (\textit{1}) is less systematic. For imprinted BA networks with photon subtraction in a random vertex, the mean and variance are essentially unaltered for the nearest neighbors. In contrast, for imprinted WS networks, and the $m=2$ BA network with photon subtraction in the node with highest connectivity, the mean and variance increase for these nodes. For the imprinted BA network with $m=1$ and photon subtraction in the node with highest connectivity, we find that the mean and variance decrease after photon subtraction. Hence, there must be other features in the topology of the imprinted network that influence the degree distribution of the nearest neighbors. These features will be laid out in the following subsection. 

\subsection{Nearest-neighbors (\textit{1}) subnetworks}\label{sec:nearestneighbor}

\begin{figure*}
\includegraphics[width=0.95\textwidth]{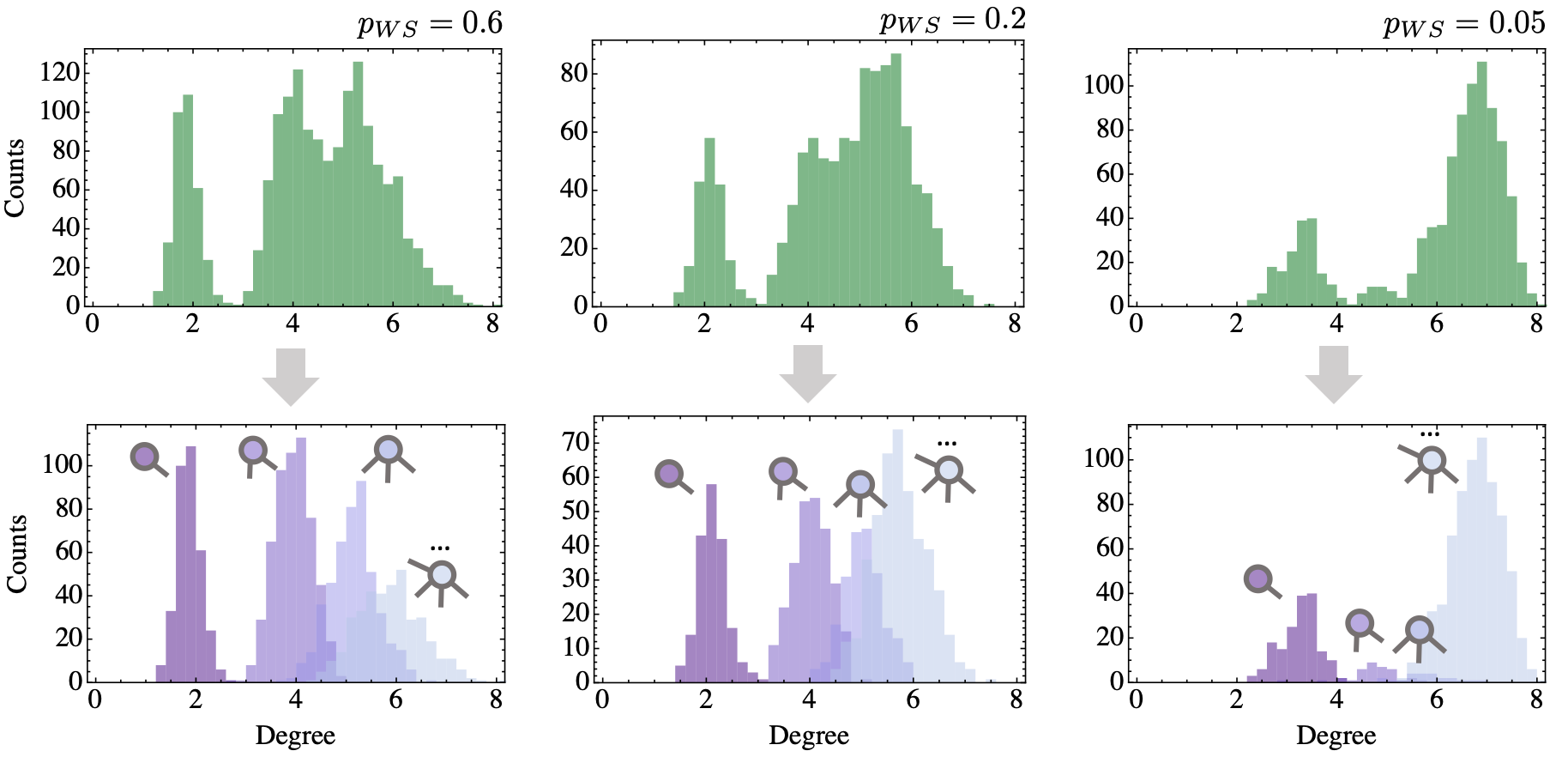}
\caption{Nearest-neighbor $(1)$ degree distribution of the emergent network after the subtraction of ten photons, using the same data as Fig.~\ref{Distance-Degree-WS}. Top row: complete nearest-neighbor histograms. Bottom row: histograms obtained by grouping nearest-neighbor nodes based on the number of other nearest-neighbor nodes they are connected to in the nearest-neighbour subnetwork. The connectivity in the nearest-neighbor sub-network is highlighted by the nodes represented next to the histogram. Nearest neighbors of the node of photon subtraction are more strongly affected by the non-Gaussian operation when they are connected to other nearest neighbors. Nodes that are not connected to any other nearest neighbors (darkest purple) are shifted to lower degrees as compared to the Gaussian distributions in Fig.~\ref{PhotonSub-Degree-WS}.
\label{Distance-One-Degree-WS}}
\end{figure*}

In Fig.~\ref{Distance-One-Degree-WS} we show that the effect on the degree of a nearest-neighbor node in the emergent correlation network is influenced by the number of other nearest neighbors it is connected to in the imprinted networks. This highlights the importance of the topology of the distance-\textit{1} sub-network, as compared to the total imprinted network. 
Quantitatively, this connectivity can be obtained by analyzing the nearest-neighbor sub-network, as highlighted in Fig.~\ref{Distance-One-Example-WS}. When we analyze all the nearest-neighbor sub-networks of our simulated WS networks, we obtain the result in Fig.~\ref{Distance-One-Degree-WS}. The bottom row of figures clearly shows that the degrees (in the emergent network) of nearest neighbors are more strongly affected by photon subtraction when these nodes have a \textit{higher number of connections to other nearest neighbors}. Thus, the different shapes of the nearest neighbor distributions (\textit{1}), for different values of the rewiring probability $p_{WS}$, can be fully understood from the nearest neighbor sub-network in the imprinted structure.

We note that photon subtraction shifts the histograms which group nearest-neighbor nodes according to their connectivity in the distance-\textit{1} sub-network to higher mean values for higher connectivity. 
However, for nodes that are not connected to other nearest neighbors, we witness a slight decrease in the average degree due to photon subtraction.

\begin{figure}
\includegraphics[width=0.39\textwidth]{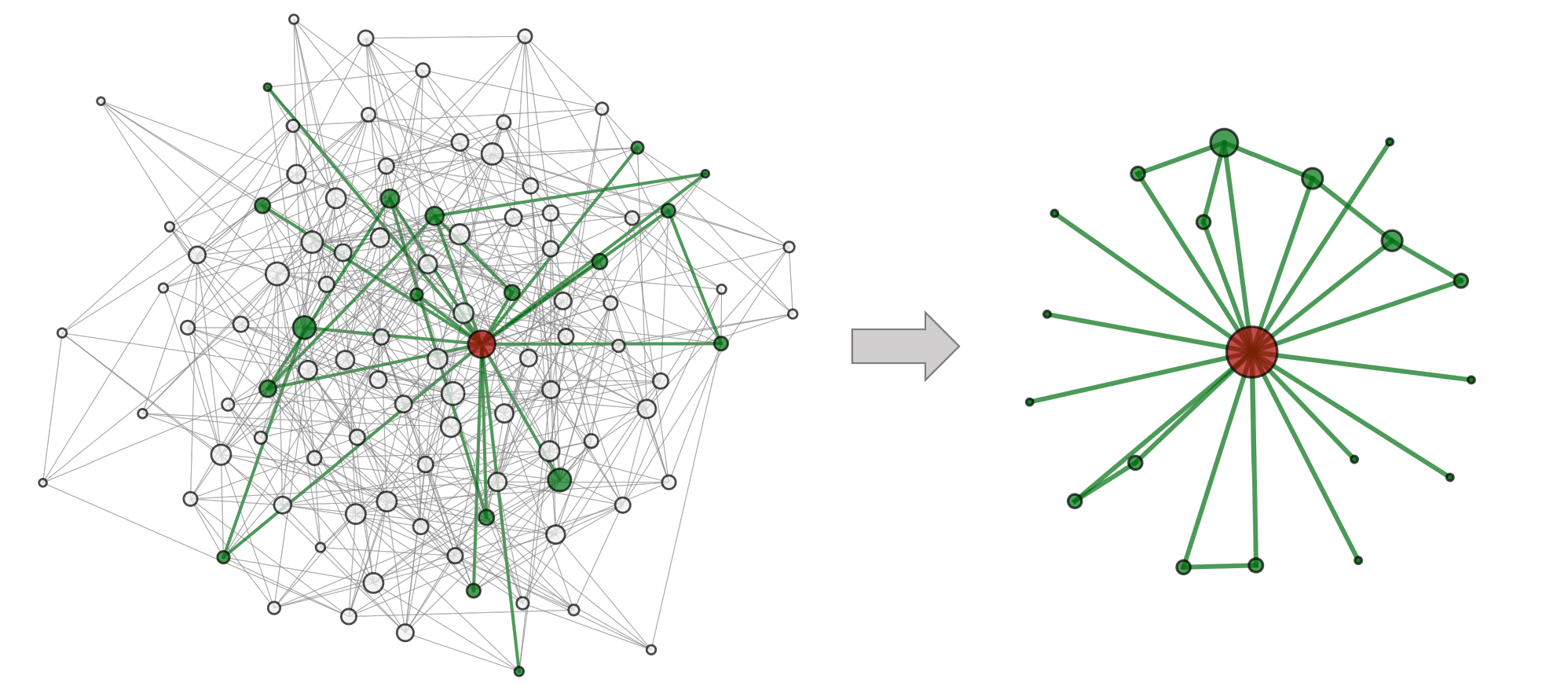}
\includegraphics[width=0.39\textwidth]{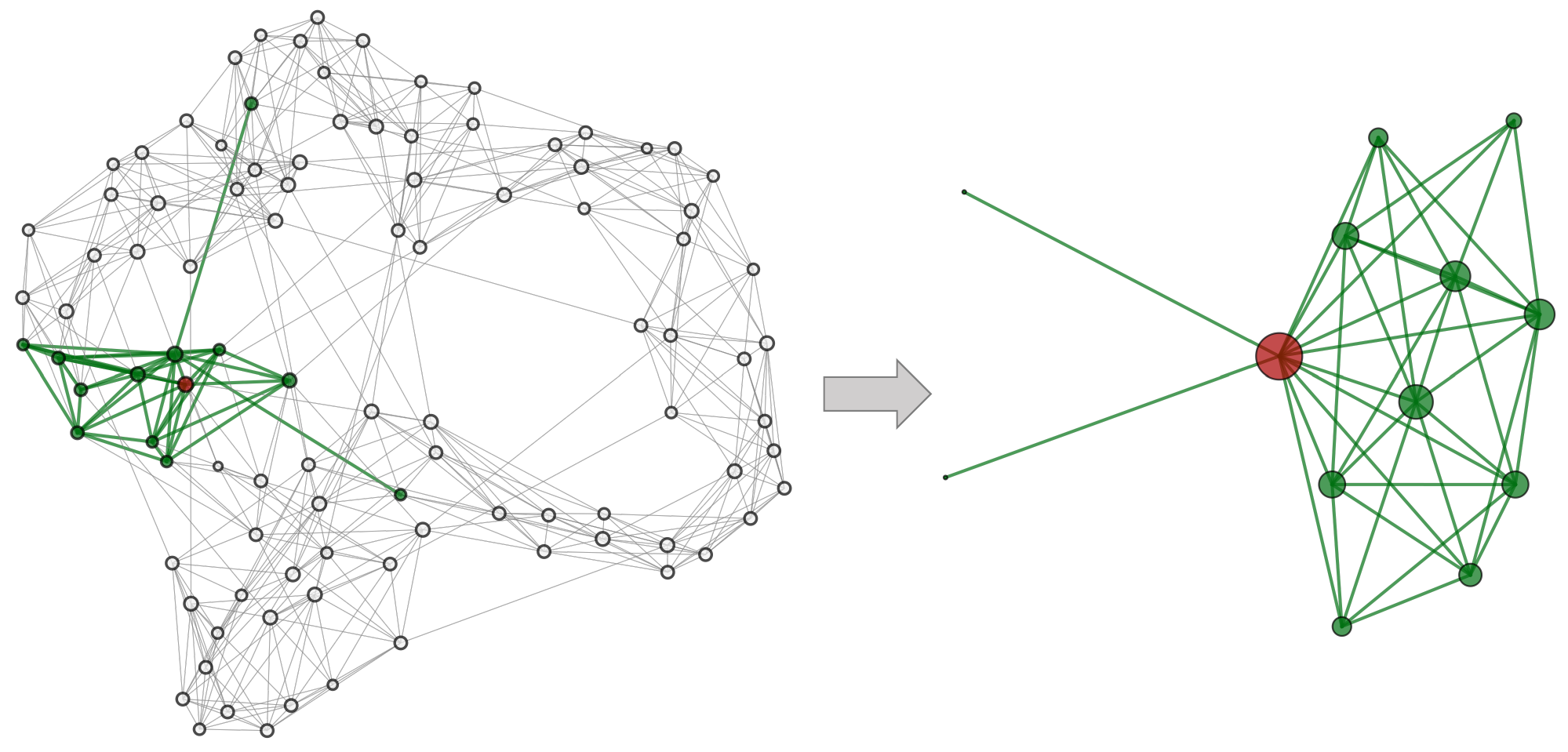}
\caption{WS networks obtained from rewiring a regular 1D network, where every node is connected to its $k=5$ nearest neighbors, with rewiring probability $p_{WS} = 0.6$ (top) and $p_{WS} = 0.05$ (bottom). The subtraction node is highlighted in red, whereas the nearest neighbor sub-network is shown in green. We thus illustrate that the value of $p_{WS}$ strongly influences the typical structure of the nearest-neighbor network, which in turn strongly influences the relative importance of the different histograms in the bottom row of Fig.~\ref{Distance-One-Degree-WS}. \label{Distance-One-Example-WS}}
\end{figure}

For $p_{WS}=0.6$ in Fig.~\ref{Distance-One-Example-WS}, we observe a significant fraction of nearest neighbors that are not connected to any other nearest neighbors. This provides a sharp contrast with the case for $p_{WS}=0.05$, where we observe that networks tend to form clusters, such that nearest neighbors are more likely to be connected to each other. From Fig.~\ref{Distance-One-Degree-WS}, we can understand how these features of the imprinted structure have a direct effect on the statistical features of the emergent network.

Similar analyses have been carried out for the nearest neighbor networks of all simulated classes of networks, leading to the same results. A particularly striking case is the BA network with $m=1$: because these networks are tree-like, nearest neighbors are never connected to one another. This explains why in Fig.~\ref{Distance-Moments-BA}, for $m=1$, the means for nearest neighbors (\textit{1}) are reduced by photon subtraction.
Moreover, we can now explain why in Fig.~\ref{Distance-Moments-BA} the contribution of the nearest-neighbor sub-network for photon subtraction from the highest degree node for BA with $m=2$ is more important than in the case of subtraction from a random node. In this latter case it is more likely to select one of the isolated nodes with a surrounding nearest-neighbor sub-network also characterized by low connectivity. However, the analysis is not sufficient to explain why the next-to-nearest neighbors (\textit{2}) are so strongly affected by photon subtraction from a random node in the BA network with $m=2$. 

\section{Summary, Discussion, and Outlook}\label{sec:Discussion}

We have addressed the question of how emergent complexity arises from imprinted complexity in continuous variable (CV) quantum networks.  In particular, we have studied  how localized non-Gaussian features, a key feature of CV quantum technologies and quantum networks such as a future quantum internet, spread through emergent network correlations.  Specifically, we used tools from complex network theory to analyze emergent networks of photon-number correlations that manifest in CV cluster states. These cluster states were initially taken as Gaussian states, constructed by applying an imprinted network of entangling $C_Z$ gates to a series of squeezed vacuum modes; non-Gaussian states were then created via photon subtraction. We focused on the particular case in which these imprinted networks of quantum gates were chosen to have a complex network structure, implementing either a Watts-Strogatz (WS) or Barab\'{a}si-Albert (BA) model.  

For Gaussian states created by imprinting WS networks, we found that increased probability of rewiring (and thus more randomness) in the imprinted networks decreases the typical degree and clustering coefficient in the emergent correlation network. When BA structures were imprinted, we found that emergent correlation networks inherit heavy tails with a structure that is strongly influenced by the number $m$ of connections added with every node in the BA preferential attachment process. For tree networks ($m=1$), the emergent networks were found to inherit a power-law tail in its degree statistics. In contrast, this power-law behavior was not observed for $m=2$.

We rendered the states non-Gaussian by subtracting ten photons in a specific node of the system. The short distance between the different nodes in the network guaranteed that the effect of photon subtraction spread far throughout the state. We showed that network theory methods are well suited to characterize the resulting non-Gaussian states, as photon subtraction was found to profoundly change the structure of the emergent networks (see Fig.~\ref{networks} for an example). We quantified this effect by comparing the distributions of the degrees and clustering coefficients before and after photon subtraction. Generally, we found that photon subtraction increased the typical degree and clustering coefficient, as seen from the mean, and increased the spread of these quantities as characterized by a growing variance. However, we also observed that the higher moments, and thus the finer structure in the tails of the distributions, depended more strongly on the network types and chosen parameters. For BA imprinted networks, we found a strong variation in behaviour of the emergent correlation network depending on whether the photons are subtracted from a hub or a random node. This highlights the importance of the local network topology in the vicinity of the node of photon-subtraction.

We proceeded to dissect the structure of the emergent correlation networks, based on the features of the imprinted network. As a first step, in Sec.~\ref{sec:Distance}, we filtered nodes in the emergent network based on their distance to the node of photon subtraction in the imprinted network. Generally, we found  that photon subtraction increases the variance and mean in the nodes of photon subtraction (distance \textit{0}), as well as for next-nearest neighbors (distance \textit{2}), while the first two moments of nodes that were further removed (distance $\geqslant 3$) remained unaffected. However, the nearest neighbors (distance \textit{1}) in imprinted tree networks defied this global trend. In Sec.~\ref{sec:nearestneighbor} we showed this is due to the particular structure of the nearest-neighbor sub-networks that are extracted from the imprinted network. When we further filtered the nodes at distance \textit{1} based on their connectivity to other nodes at distance \textit{1}, we found that higher connectivity led to a stronger increase in the typical degree due to photon subtraction. In other words, nearest-neighbor nodes were more strongly affected by photon subtraction if they were connected to many other nearest-neighbour nodes. In a tree network, nearest neighbours can never be connected to other nearest neighbors, which slightly decreased the average degree in the correlation network for the nodes at distance \textit{1}. The distance-resolved moment analysis furthermore showed that the fine structure in the tails of the distributions, as captured by higher moments, does depend strongly on the network topology.

The structural features we uncovered help us understand that the bulk features of emergent correlation networks of imprinted tree structures are only slightly impacted by photon subtraction. The effects that do manifest are mainly observed in the tails of the distribution, as seen in the moment analysis. The difference between subtraction in a random vs. a highly connected node is particularly interesting when interpreted in the light of the network structure. Photon subtraction is a finite resource to enhance correlations in the system. On the one hand, in a node that is correlated to only a small number of other nodes it significantly increases correlations between this small set of nodes, thus causing larger values to appear in the tails. On the other hand, when a photon is subtracted in a hub with very high number of nodes correlated to it, the increase in correlations is spread out.\\

This work represents a first test of complex network structures in the CV quantum regime under the applications of necessary non-Gaussian operations. We believe this investigation can be fruitful for making decisions about the structure of future quantum technologies on a large scale. 

We want to stress that the network models we have chosen are a very limited set of the ones that have been developed in complex network theory. The ones we selected have specific features - in particular: variable level of randomness, presence of hubs, power law distribution, and average short path distances - that have never been tested before in non-Gaussian cluster states. Of course other models and features can be treated in future works.

Moreover, complex network methods have proven useful for the theoretical investigation of such classes of states. 
Indeed, the characterization of highly multimode non-Gaussian states is generally an arduous task, where standard tools of CV quantum optics fall short. Typical experimental methods such as homodyne tomography lack the necessary scaling properties to study these systems, and theoretical constructs such as Wigner functions become hard to handle. To overcome this problem, one may look for global properties, e.g., Wigner negativity of the full multimode states~\cite{cimini2020neural}. Such global features have the disadvantage that they gloss over the local or neighborhood structures of the state, which are essential in multi-partite quantum platforms. Our results show that network theory offers effective statistical tools for studying these states. At present, we are unaware of any other method that allows us to describe the physical features that we deduced for these large non-Gaussian states. They offer us a road map for more detailed bottom-up studies of particular features such as the role of connections in the nearest-neighbor sub-network.

\begin{acknowledgments}
 This work was supported  by the European Research Council under the Consolidator Grant COQCOoN (Grant  No. 820079).  
This work was also performed in part with support by the U.S. National Science Foundation under grants
CCF-1839232, OAC-1740130, PHY-1806372, and PHY-1748958; and in conjunction with the QSUM
program, which is supported by the Engineering and Physical Sciences Research
Council grant EP/P01058X/1.  We thank the Complex Quantum Systems group at Laboratoire Kastler Brossel (and in particular D. Delande) for access to their computational facilities.

The software used to generate the complex networks in the Article is available on \cite{Code}
\end{acknowledgments}

\appendix

\section{Simulating cluster states and the evaluation of photon-number correlations}\label{app:corr}

\subsection{Perfect matching}
To execute the simulations presented in this article, we generated random complex networks using the ``python-inetwork'' library. From these randomly generated networks, we extracted the adjacency matrix ${\cal A}$ to generate the cluster state covariance matrices, as described in Eq.~(\ref{eq:V}). After generating the covariance matrix $V$ of a Gaussian network state, we used it to evaluate the photon-number correlations $[\mathbb{C}]_{ij}$ of Eq.~(\ref{Cij}) for the photon-subtracted states in Eq.~(\ref{subrho}).

The main technique used to evaluate these correlations relies on the properties of Gaussian quantum states. We previously used this method to fully characterize single-photon subtracted states in~\cite{WalschaersPRA17}. We illustrate this method by highlighting the evaluation of the element $\<\hat{n}_i \hat{n}_j\>$ in Eq.~(\ref{Cij}). For a photon-subtracted state we find
\begin{equation}\label{eq:NNCalc}
    \<\hat{n}_i \hat{n}_j\> = \frac{\tr[\hat a^{\dag}_{S_n} \dots \hat a^{\dag}_{S_1}\hat a^{\dag}_i \hat a^{\dag}_j\hat a_j \hat a_i\hat a_{S_1}\dots \hat a_{S_n} \rho]}{\tr[\hat a^{\dag}_{S_n} \dots \hat a^{\dag}_{S_1}\hat a_{S_1}\dots \hat a_{S_n} \rho]},
\end{equation}
where $\rho$ denotes the density matrix of the Gaussian network state and $i \neq j$. We then use a general property for Gaussian states, that allows us to express
\begin{align}\label{eq:perfMatch}
&\tr[\hat a^{\dag}_{S_n} \dots \hat a^{\dag}_{S_1}\hat a^{\dag}_i \hat a^{\dag}_j\hat a_j \hat a_i\hat a_{S_1}\dots \hat a_{S_n} \rho]\\
&=\sum_{\cal P} \prod_{\{p_1, p_2\} \in {\cal P}} \tr[\hat a^{\#}_{p_1}\hat a^{\#}_{p_2}\rho],\nonumber
\end{align}
where we introduce the label ${\cal P}$ to denote a ``perfect matching.'' A perfect matching means any way of dividing a set up into pairs, while maintaining the order. When we consider, for example the set $\{1,2,3,4\}$, one possible perfect matching would be $\{\{1,3\},\{2,4\}\}$. In this example, the notation $\{p_1, p_2\} \in {\cal P}$ refers to $\{1,3\}$ and $\{2,4\}$. In Eq.~(\ref{eq:perfMatch}), these perfect matchings are used to split the set of creation and annihilation operators, $\hat a^{\dag}_{S_n} \dots \hat a^{\dag}_{S_1}\hat a^{\dag}_i \hat a^{\dag}_j\hat a_j \hat a_i\hat a_{S_1}\dots \hat a_{S_n}$, in pairs: In total, we have $2n + 4$ creation and annihilation operators, which we can associate with a set of indices $\{1, \dots, 2n+4\}$. The sum over ${\cal P}$ runs over all possible perfect matchings of this index set. For every given perfect matching ${\cal P}$, we then multiply all the quantities $\tr[\hat a^{\#}_{p_1}\hat a^{\#}_{p_2}\rho]$ for the different paired indices $\{p_1, p_2\} \in {\cal P}$. The quantities $p_1$ and $p_2$ are indices in the index set, and the quantity $\hat a^{\#}_{p_j}$ denotes the creation or annihilation operator that occurs at the $p_j$th position in the product $\hat a^{\dag}_{S_n} \dots \hat a^{\dag}_{S_1}\hat a^{\dag}_i \hat a^{\dag}_j\hat a_j \hat a_i\hat a_{S_1}\dots \hat a_{S_n}$. Let us list some examples: $\tr[\hat a^{\#}_{1}\hat a^{\#}_{2}\rho] = \tr[\hat a^{\dag}_{S_1}\hat a^{\dag}_{S_2}\rho]$, $\tr[\hat a^{\#}_{2}\hat a^{\#}_{n+3}\rho] = \tr[\hat a^{\dag}_{S_2}\hat a_j\rho]$, and $\tr[\hat a^{\#}_{1}\hat a^{\#}_{n+5}\rho] = \tr[\hat a^{\dag}_{S_1}\hat a_{S_1}\rho]$.

What remains is now to evaluate the quantities $\tr[\hat a^{\#}_{p_1}\hat a^{\#}_{p_2}\rho]$, and this can be done directly via the covariance matrix $V$, by expressing the creation and annihilation operators in terms of quadrature operators (see also~\cite{WalschaersPRA17,PhysRevA.99.023836} for more details). We find the following identities:
\begin{align}\label{eq:identity}
&\tr[\hat a^{\dag}_j\hat a^{\dag}_k\rho] = \frac{1}{4}[V_{jk} - V_{j+N\,,k+N} -i (V_{j\,,k+N} + V_{j+N\,,k})],\\
&\tr[\hat a_j\hat a_k\rho] = \frac{1}{4}[V_{jk} - V_{j+N\,,k+N} + i (V_{j\,,k+N} + V_{j+N\,,k})],\\
&\tr[\hat a^{\dag}_j\hat a_k\rho] = \frac{1}{4}[V_{jk} + V_{j+N\,,k+N} + i (V_{j\,,k+N} - V_{j+N\,,k})-2\delta_{jk}] .\label{eq:identityfinal}
\end{align}
These identities are expressed in the mode basis that corresponds to the nodes of the network state.

Using Eqs.~\eqref{eq:V} and~\eqref{eq:identity}-\eqref{eq:identityfinal}, we can calculate the weighted adjacency matrix of the emergent network, $\mathbb{A}_{ij}$. In principle, it is possible to calculate $\mathbb{A}_{ij}$ for both the cluster state as well as the photon-subtracted state. However, as we will show, it is exponentially difficult to write a closed-form expression for $\mathbb{A}_{ij}$ in the photon-subtracted state. Below, we will first calculate $\mathbb{A}_{ij}$ in the cluster state. Then we discuss the photon-subtracted case.

\subsection{Gaussian state}\label{App:Gaussian}
Since the cluster state $\rho$ is Gaussian, one can evaluate the connected correlation $ c_{ij} \equiv \trho{ \hat{n}_i\hat{n}_j } - \trho{ \hat{n}_i}\trho{ \hat{n}_j }$ by applying Eq.~\eqref{eq:perfMatch}. Only two terms remain, giving
\begin{equation}\label{eqn: connected correlation}
c_{ij} = \trho{\hat{a}_i^\dagger \hat{a}_j} \trho{\hat{a}_i\hat{a}_j^\dagger} +  \trho{\hat{a}_i^\dagger \hat{a}_j^\dagger} \trho{\hat{a}_i\hat{a}_j}.
\end{equation}
Each term on the right hand side of Eq.~\eqref{eqn: connected correlation} can be evaluated using Eqs.~\eqref{eq:V} and~\eqref{eq:identity}-\eqref{eq:identityfinal}. 

For $i\neq j$, we have
\begin{align}
& \trho{\hat{a}_i^\dagger \hat{a}_j} = s (\mathcal{A}^2)_{ij}/4,\nonumber\\
& \trho{\hat{a}_i \hat{a}_j^\dagger} = s (\mathcal{A}^2)_{ij}/4,\nonumber\\
& \trho{\hat{a}_i^\dagger \hat{a}_j^\dagger} = -(s (\mathcal{A}^2)_{ij} - 2is \mathcal{A}_{ij})/4,\nonumber\\
& \trho{\hat{a}_i \hat{a}_j} = -(s (\mathcal{A}^2)_{ij} + 2is \mathcal{A}_{ij})/4.
\end{align}
Therefore
\begin{equation} \label{eqn: cij}
c_{ij} = \frac{s^2}{8} \left( (\mathcal{A}^2)_{ij}^2 + 2\mathcal{A}_{ij} \right)
\end{equation}
where we used that $(\mathcal{A}_{ij})^2 = \mathcal{A}_{ij}$.

For $i=j$, we have
\begin{align}
& \trho{\hat{a}_i^\dagger \hat{a}_i} = \left( s + 1/s + s{\cal D}_i - 2\right)/4,\nonumber\\
& \trho{\hat{a}_i \hat{a}_i^\dagger} = \left( s + 1/s + s{\cal D}_i + 2\right)/4,\nonumber\\
& \trho{\hat{a}_i^\dagger \hat{a}_i^\dagger} = \left( s - 1/s - s{\cal D}_i\right)/4,\nonumber\\
& \trho{\hat{a}_i \hat{a}_i} = \left( s - 1/s - s{\cal D}_i\right)/4,
\end{align}
where we used that $\mathcal{A}_{ii} = 0$ and $(\mathcal{A}^2)_{ii} = {\cal D}_i$. Therefore
\begin{equation} \label{eqn: ni}
c_{ii} = \frac{1}{8} \left( s^2 + \frac{1}{s^2} + s^2({\cal D}_i)^2 + 2{\cal D}_i - 2\right).
\end{equation}
One obtains Eq.~\eqref{Cij-Gaus} in the main text from Eqs.~\eqref{eqn: cij} and~\eqref{eqn: ni}, where $c_{ii}$ is denoted in the main text as $\mathfrak{N}(s,{\cal D}_i)$.

\subsection{Photon-subtracted states}\label{compSP}

For photon-subtracted states, the correlations quickly become hard to evaluate. At the basis of this complexity lies the appearance of perfect matchings in Eq. \eqref{eq:perfMatch}. Finding all possible perfect matchings is a computationally hard problem that belongs to the complexity class $\# P$. It is also the problem which lies at the basis of the hardness of Gaussian Boson Sampling. Hence, when the number of subtracted photons grows, correlation functions quickly become practically impossible to evaluate.

Generally speaking, the best algorithms for evaluating Eq. \eqref{eq:perfMatch} use recursive techniques. In our work, we greatly simplify this computational problem by subtracting all the photons in the same mode, i.e., $S_1=\dots=S_n=S$. In this case, expression \eqref{eq:NNCalc} for $\<\hat n_i \hat n_j\>$ only contains creation and annihilation operators in three different modes. This greatly limits the different possible factors $\tr[\hat a^{\#}_{p_1}\hat a^{\#}_{p_2}\rho]$ that can appear in Eq. \eqref{eq:perfMatch}. Many different partitions will lead to equivalent contributions. The problem thus reduces to that of identifying all the different classes of partitions, evaluating the contribution, and counting the multiplicity. 

Once we subtract more than three photons, the total number of different classes of terms remains fixed. We evaluated these by hand and counted a total of 43 classes, each appearing with a certain multiplicity. The correlation networks are then calculated by evaluating the contribution by multiplying relevant quantities given by Eqs. \eqref{eq:identity} - \eqref{eq:identityfinal} for each of these 43 classes. Then we multiply each contribution with the right multiplicity, which depends on the number of subtractions and can be calculated through combinatorics. The quantities $\<\hat n_i\>$ are evaluated using the same method. For more details, we refer to the code that was used to carry out the simulations \cite{Code}.

In Appendix \ref{App:Analitics}, we prove analytically that the photon number correlations $\<\hat n_i\hat n_j\> - \<\hat n_i\>\<\hat n_j\>$ only change when nodes $i$ and $j$ are in the vicinity of of the photon-subtracted node (see Appendix \ref{App:Analitics} for details). This implies that we can first generate a full correlation network of Gaussian correlations by relying on the analytical formula (\ref{eqn: cij}) and subsequently we can use (\ref{eq:perfMatch}) to update only the affected correlations. This method is implemented in the second version of our code for simulating WS networks \cite{Code}.

\section{Correlations unaffected by photon subtraction}\label{App:Analitics}

In this section of the appendix, we prove a general result for correlations in $n$-photon subtracted states: photon subtraction can only change the covariance between observables, if both observables are initially correlated to the mode in which the photons are subtracted.

Assume that we subtract $n$ photons from Gaussian state $\rho$ in a mode with label $S$ and associated annihilation operator $\hat a_S$. We denote the algebra of observables ${\cal A}_{\rm near}$ as those observables which are ``near to $S$'' in the sense that ${\cal A}_{\rm near}$ is generated by observables $\hat a_k$ and $\hat a^{\dag}_k$ for which either $\trho{\hat a^{\dag}_k \hat a_S} \neq 0$ or $\trho{\hat a_k \hat a_S} \neq 0$. Following Eqs.~(\ref{eq:identity})-(\ref{eq:identityfinal}) we can equivalently define ${\cal A}_{\rm near}$ as the algebra of observables restricted to the modes with labels $k$ for which the matrix 
\begin{equation}
    \begin{pmatrix}
    V_{Sk} & V_{S\, k+N} \\ V_{S+N \, k} & V_{S+N \, k+N}
    \end{pmatrix} \neq \begin{pmatrix}
    0 & 0\\
    0 & 0 
    \end{pmatrix}
\end{equation}
We then define ${\cal A}_{\rm far}$ as the complement of ${\cal A}_{\rm near}$ in the sense that ${\cal A}_{\rm near} \otimes {\cal A}_{\rm far} = {\cal A}$, where ${\cal A}$ is the full algebra of observables on the $N$-mode Fock space that describes the entire system.

\begin{theorem}\label{Theorem1}
For any observable $\hat X \in {\cal A}_{\rm near}$ and another arbitrary observable $\hat Y \in {\cal A}_{\rm far}$, it holds that
\begin{equation}
    \<\hat X \hat Y\> - \<\hat X \>\< \hat Y\> = \tr[\hat X \hat Y \rho] - \tr[\hat X \rho]\tr[\hat Y \rho],
\end{equation}
where $\<\dots\>$ denotes the expectation value in the $n$-photon subtracted state and $\tr[\dots \rho]$ is the expectation value in the initial Gaussian state.
\end{theorem}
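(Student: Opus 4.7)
The plan is to reduce every quantity in the identity to a Wick (Isserlis) expansion in the Gaussian state $\rho$ and then exploit the one structural fact that defines the near/far split: $\tr[\hat a_S^{\#}\hat a_k^{\#}\rho]=0$ whenever $k\in\mathcal{A}_{\rm far}$, for any choice of adjoints. Setting $Z=\tr[(\hat a_S^\dag)^n\hat a_S^n\rho]$ and using that $\hat Y$ commutes with every operator supported on near modes (in particular with $\hat a_S^\dag$, $\hat a_S$ and $\hat X$), I would first rewrite
\begin{equation*}
\<\hat Y\>=\frac{\tr[(\hat a_S^\dag)^n\hat a_S^n\hat Y\rho]}{Z},\qquad \<\hat X\hat Y\>=\frac{\tr[(\hat a_S^\dag)^n\hat X\hat a_S^n\hat Y\rho]}{Z},
\end{equation*}
and, by multilinearity of the trace in both $\hat X$ and $\hat Y$, reduce to the case where $\hat X$ and $\hat Y$ are monomials in the mode operators of the near and far regions respectively.

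The first step is to treat $\<\hat Y\>$ on its own. Isserlis' theorem expands $\tr[(\hat a_S^\dag)^n\hat a_S^n\hat Y\rho]$ as a sum over all pairings of the $2n$ $S$-operators together with the operators making up $\hat Y$. The vanishing of every $S$--far two-point function forces the $S$-operators to pair only among themselves and the $Y$-operators only among themselves, so the Wick sum factorizes as $\tr[(\hat a_S^\dag)^n\hat a_S^n\rho]\cdot\tr[\hat Y\rho]=Z\,\tr[\hat Y\rho]$. Hence $\<\hat Y\>=\tr[\hat Y\rho]$, which is the $\hat X=\mathds{1}$ case of the theorem and the single-point identity needed below.

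The substantive step is the analogous analysis of $\tr[(\hat a_S^\dag)^n\hat X\hat a_S^n\hat Y\rho]$. Again only pairings with no $S$--$Y$ contractions survive, but $S$--$X$, $X$--$X$, $X$--$Y$ and $Y$--$Y$ contractions are all allowed. I would partition the surviving pairings into class (A), containing no $X$--$Y$ contraction, and class (B), containing at least one. Class A factorizes as (Wick sum of the $S$-- and $X$-operators alone)$\times$(Wick sum of the $Y$-operators alone), i.e.\ $\tr[(\hat a_S^\dag)^n\hat X\hat a_S^n\rho]\,\tr[\hat Y\rho]=Z\,\<\hat X\>\,\tr[\hat Y\rho]$, which combined with $\<\hat Y\>=\tr[\hat Y\rho]$ is exactly $Z\,\<\hat X\>\<\hat Y\>$. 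The theorem then reduces to showing that the class B sum equals $Z\bigl(\tr[\hat X\hat Y\rho]-\tr[\hat X\rho]\tr[\hat Y\rho]\bigr)$; dividing by $Z$ and rearranging yields the claim.

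The main obstacle is precisely this class B identity. Each class B pairing splits into a partial Wick pattern on the $S$-operators that may absorb some $X$-operators, times a residual Wick pattern on the remaining $X$-operators together with all $Y$-operators constrained to contain at least one $X$--$Y$ link. The residual factor is by construction a truncated Gaussian covariance $\tr[\hat X'_{\rm rem}\hat Y\rho]-\tr[\hat X'_{\rm rem}\rho]\tr[\hat Y\rho]$, with $\hat X'_{\rm rem}$ the ordered subword of $\hat X$ remaining after removing the $S$-contracted operators. I would close the proof by constructing a combinatorial bijection—or equivalently a normal-ordering/generating-function identity—showing that summing these partial contributions over all admissible $S$-subpatterns reassembles the full $Z\cdot\bigl(\tr[\hat X\hat Y\rho]-\tr[\hat X\rho]\tr[\hat Y\rho]\bigr)$. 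This reassembly is the one genuinely non-routine step; everything else is bookkeeping on Isserlis' formula and the vanishing of $S$--far two-point functions.
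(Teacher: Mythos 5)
Your strategy is the same as the paper's: reduce to Wick monomials, establish $\<\hat Y\> = \tr[\hat Y \rho]$ from the vanishing of all $S$--far contractions, and then organize the perfect-matching expansion of $\<\hat X\hat Y\>$ so that the matchings with no $X$--$Y$ contraction reproduce $Z\<\hat X\>\<\hat Y\>$. Everything up to that point is correct. But the proof then stops: the entire content of the theorem sits in your ``class B'' identity, and you do not prove it --- you only announce that a combinatorial bijection or generating-function identity ``would'' reassemble the class-B sum into $Z\bigl(\tr[\hat X\hat Y\rho]-\tr[\hat X\rho]\tr[\hat Y\rho]\bigr)$. That is a genuine gap rather than bookkeeping, and it is exactly the step that Eq.~\eqref{eq:perfMatch4} carries in the paper's own proof.

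To see why it is not routine, perform the split you describe. Let $A$ denote the subset of operators of $\hat X$ that a class-B matching pairs with subtraction operators; the class-B sum then equals $\sum_{A} G_A\,\bigl(\tr[\hat X_{A^c}\hat Y\rho]-\tr[\hat X_{A^c}\rho]\tr[\hat Y\rho]\bigr)$, where $G_A$ is the partial Wick sum over matchings of the word $(\hat a_S^{\dag})^n\hat X_A(\hat a_S)^n$ in which every operator of $\hat X_A$ is absorbed by an $S$-operator, and $\hat X_{A^c}$ is the ordered complementary subword. The single term $A=\emptyset$ already equals the target $Z\bigl(\tr[\hat X\hat Y\rho]-\tr[\hat X\rho]\tr[\hat Y\rho]\bigr)$, so your ``reassembly'' must show that every contribution with $A\neq\emptyset$ vanishes or that these contributions cancel among themselves. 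You provide no mechanism for this, and the terms are not individually zero in general: for $\hat X$ quartic in a near mode $k$ that is itself correlated with the far mode carrying $\hat Y$ (e.g.\ a chain $S\!-\!k\!-\!m\!-\!l$ with $\hat Y$ supported on $l$), the $|A|=2$ contributions contain nonvanishing $S$--$k$ and $k$--$l$ contractions simultaneously, and $G_A$ encodes precisely the photon-subtraction correction to a near-mode correlator, which is nonzero by the very definition of ${\cal A}_{\rm near}$. What rescues the case actually used in the main text, $\hat X=\hat n_i$ and $\hat Y=\hat n_j$, is parity: for $\hat X$ quadratic, any $A\neq\emptyset$ compatible with a surviving cross-link would leave an odd number of operators in the $S$-block, forcing $G_A=0$. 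Your argument needs either this restriction made explicit or an actual proof of the cancellation; as written, it halts at the theorem's only hard step.
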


\begin{proof}
We first of all use that every observable $\hat X \in {\cal A}_{\rm near}$ can be arbitrarily well approximated by a polynomial in creation and annihilation operators in ${\cal A}_{\rm near}$. This implies that
\begin{equation}
\hat X = \sum_{j=0}^{\infty} \sum_{k} c_{j,k} \hat a^{\#}_{1,k} \dots \hat a^{\#}_{j,k},
\end{equation}
where $ \hat a^{\#}_{1,k}$ can either be a creation or an annihilation operator. The sum over $k$ takes into account that there are many possible products of creation and annihilation operators, also known as Wick monomials, of length $j$.   And similarly for $\hat Y$ we find
\begin{equation}
\hat Y = \sum_{j=0}^{\infty} \sum_{k} c_{j,k} \hat b^{\#}_{1,k} \dots \hat b^{\#}_{j,k},
\end{equation}
To highlight that the creation and annihilation operators that build $\hat X$ and $\hat Y$ have different supports, we have noted the the creation and annihilation operators for ${\cal A}_{\rm far}$ as $\{\hat b^{\#}_{j,k}\}$.

The fact that any observable can be written as a series expansion of creation and annihilation operators implies that our theorem can be proven by proving that 
\begin{equation}
\begin{split}\label{toprove}
   & \< \hat a^{\#}_{1} \dots \hat a^{\#}_{j}  \hat b^{\#}_{1} \dots \hat b^{\#}_{j'}\> - \<\hat a^{\#}_{1} \dots \hat a^{\#}_{j}  \>\<  \hat b^{\#}_{1} \dots \hat b^{\#}_{j'}\> \\
    &= \tr[\hat a^{\#}_{1} \dots \hat a^{\#}_{j}  \hat b^{\#}_{1} \dots \hat b^{\#}_{j'} \rho] - \tr[\hat a^{\#}_{1} \dots \hat a^{\#}_{j}  \rho]\tr[ \hat b^{\#}_{1} \dots \hat b^{\#}_{j'} \rho],
\end{split}
\end{equation}
First of all, let us consider the term
\begin{equation}\label{eq:<Y>}
\<  \hat b^{\#}_{1} \dots \hat b^{\#}_{j'}\> = \frac{\tr\left[\left(\hat a^{\dag}_S\right)^n\hat b^{\#}_{1} \dots \hat b^{\#}_{j'}\left(\hat a_S\right)^n \rho \right]}{\tr\left[\left(\hat a^{\dag}_S\right)^n\left(\hat a_S\right)^n \rho \right]}
\end{equation}
Because $\hat b^{\#}_{1}, \dots, \hat b^{\#}_{j'} \in {\cal A}_{\rm far}$, we find that $\trho[\hat b^{\dag}_k \hat a_S] = 0$ and $\trho[\hat b_k \hat a_S] = 0$. An application of Eq.(\ref{eq:perfMatch}) than shows that
\begin{equation}
\begin{split}
&\tr\left[\left(\hat a^{\dag}_S\right)^n\hat b^{\#}_{1} \dots \hat b^{\#}_{j'}\left(\hat a_S\right)^n \rho \right] \\
&= \tr\left[\left(\hat a^{\dag}_S\right)^n\left(\hat a_S\right)^n \rho \right]\tr\left[\hat b^{\#}_{1} \dots \hat b^{\#}_{j'} \rho \right]
\end{split}
\end{equation}
When we then take into account the denominator in Eq.(\ref{eq:<Y>}), we find
\begin{equation}\label{eq:expB}
\<  \hat b^{\#}_{1} \dots \hat b^{\#}_{j'}\> = \tr\left[\hat b^{\#}_{1} \dots \hat b^{\#}_{j'} \rho \right],
\end{equation}
it automatically follows that $\<\hat Y\> = \tr[\hat Y \rho]$.

The expectation value $ \<\hat a^{\#}_{1} \dots \hat a^{\#}_{j}  \>$ is much more intricate to evaluate since by construction either $\trho{\hat a^{\dag}_k \hat a_S} \neq 0$ or $\trho{\hat a_k \hat a_S} \neq 0$. We then find 
\begin{align}\label{eq:perfMatch2}
&\tr[\left(\hat a^{\dag}_S\right)^n\hat a^{\#}_{1} \dots \hat a^{\#}_{j}\left(\hat a_S\right)^n \rho ]=\sum_{\cal P} \prod_{\{p_1, p_2\} \in {\cal P}} \tr[\hat a^{\#}_{p_1}\hat a^{\#}_{p_2}\rho]\nonumber\\
&= \tr\left[\left(\hat a^{\dag}_S\right)^n\left(\hat a_S\right)^n \rho \right]\tr\left[\hat a^{\#}_{1} \dots \hat a^{\#}_{j} \rho \right] + \text{cross terms}.
\end{align}
We therefore find that
\begin{equation}
\begin{split}\label{eq:expA}
 \<\hat a^{\#}_{1} \dots \hat a^{\#}_{j}  \> = &\tr\left[\hat a^{\#}_{1} \dots \hat a^{\#}_{j} \rho \right]+  {\cal T}.
\end{split}
\end{equation}
The cross terms ${\cal T}$ contain expectation values that combine creation or annihilation operators of the Wick monomial $\hat a^{\#}_{1} \dots \hat a^{\#}_{j}$ with either $\hat a_S$ or $\hat a^{\dag}_S$ as obtained from the perfect matching [Eq. (\ref{eq:perfMatch2})]. For what follows, it is useful to explicitly identify these cross terms as
\begin{equation}
\begin{split}\label{eq:crossTerms}
{\cal T} = &\<\hat a^{\#}_{1} \dots \hat a^{\#}_{j}  \>-\tr\left[\hat a^{\#}_{1} \dots \hat a^{\#}_{j} \rho \right].
\end{split}
\end{equation}

Finally, we can now consider $ \< \hat a^{\#}_{1} \dots \hat a^{\#}_{j}  \hat b^{\#}_{1} \dots \hat b^{\#}_{j'}\> $ and apply many of the same lines of reasoning. We express
\begin{align}\label{eq:perfMatch3}
&\tr[\left(\hat a^{\dag}_S\right)^n\hat a^{\#}_{1} \dots \hat a^{\#}_{j}\hat b^{\#}_{1} \dots \hat b^{\#}_{j'}\left(\hat a_S\right)^n \rho]\\
&=\sum_{\cal P} \prod_{\{p_1, p_2\} \in {\cal P}} \tr[\hat a^{\#}_{p_1}\hat a^{\#}_{p_2}\rho],\nonumber
\end{align}
where $\hat a^{\#}_{p_i}$ can either be a creation/annihilation operator of the type $\hat a$ or of the type $\hat b$. A wide variety of terms will appear in the set of perfect matchings [Eq. (\ref{eq:perfMatch3})]. Generally speaking, we have the terms related to $\hat a^{\#}_{1} \dots \hat a^{\#}_{j}$, those related to photon subtraction, i.e., $\hat a_S$ or $\hat a^{\dag}_S$, and those related to $\hat b^{\#}_{1} \dots \hat b^{\#}_{j'}$. The crucial element is that $\trho[\hat b^{\dag}_k \hat a_S] = 0$ and $\trho[\hat b_k \hat a_S] = 0$, which implies that any perfect matching that matches a subtraction operator $\hat a_S$ or $\hat a^{\dag}_S$ with a creation or annihilation operator that originates from $\hat b^{\#}_{1} \dots \hat b^{\#}_{j'}$ vanishes. In other words, the subtraction operators $\hat a_S$ or $\hat a^{\dag}_S$ can be matched only with creation and annihilation operators that live on ${\cal A}_{\rm near}$. This implies that we can rewrite
\begin{equation}\label{eq:perfMatch4}
\begin{split}
&\frac{\tr[\left(\hat a^{\dag}_S\right)^n\hat a^{\#}_{1} \dots \hat a^{\#}_{j}\hat b^{\#}_{1} \dots \hat b^{\#}_{j'}\left(\hat a_S\right)^n \rho]}{\tr\left[\left(\hat a^{\dag}_S\right)^n\left(\hat a_S\right)^n \rho \right]}\\
=&\tr[\hat a^{\#}_{1} \dots \hat a^{\#}_{j}\hat b^{\#}_{1} \dots \hat b^{\#}_{j'}\rho] \\
&+\frac{\tr[\hat b^{\#}_{1} \dots \hat b^{\#}_{j'}\rho] \tr[\left(\hat a^{\dag}_S\right)^n\hat a^{\#}_{1} \dots \hat a^{\#}_{j}\left(\hat a_S\right)^n\rho]}{\tr\left[\left(\hat a^{\dag}_S\right)^n\left(\hat a_S\right)^n \rho \right]}\\
&-\tr[\hat a^{\#}_{1} \dots \hat a^{\#}_{j}\rho] \tr[\hat b^{\#}_{1} \dots \hat b^{\#}_{j'}\rho]\\
=&\tr[\hat a^{\#}_{1} \dots \hat a^{\#}_{j}\hat b^{\#}_{1} \dots \hat b^{\#}_{j'}\rho] +\tr[\hat b^{\#}_{1} \dots \hat b^{\#}_{j'}\rho] \<\hat a^{\#}_{1} \dots \hat a^{\#}_{j}\>\\
&-\tr[\hat a^{\#}_{1} \dots \hat a^{\#}_{j}\rho] \tr[\hat b^{\#}_{1} \dots \hat b^{\#}_{j'}\rho].
\end{split}
\end{equation}
where the term $\tr[\hat a^{\#}_{1} \dots \hat a^{\#}_{j}\rho] \tr[\hat b^{\#}_{1} \dots \hat b^{\#}_{j'}\rho]$ is subtracted to avoid double counting. We can then use the definition of the cross terms \eqref{eq:crossTerms} to write
\begin{equation}\begin{split}\label{eq:expAB}
 \< \hat a^{\#}_{1} \dots \hat a^{\#}_{j}  \hat b^{\#}_{1} \dots \hat b^{\#}_{j'}\> = &\tr[\hat a^{\#}_{1} \dots \hat a^{\#}_{j}\hat b^{\#}_{1} \dots \hat b^{\#}_{j'}\rho] \\
 &+ {\cal T}\,  \tr[\hat b^{\#}_{1} \dots \hat b^{\#}_{j'}\rho]. 
 \end{split}
\end{equation}
This provides us with the final ingredient we need to complete the proof.

When the results \eqref{eq:expB}, \eqref{eq:expA}, and \eqref{eq:expAB} are combined, we find that all terms proportional to ${\cal T}$ drop out. As such, we find indeed that the identity \eqref{toprove} holds. Because the identity holds for all possible Wick monomials, it follows that 
\begin{equation}
    \<\hat X \hat Y\> - \<\hat X \>\< \hat Y\> = \tr[\hat X \hat Y \rho] - \tr[\hat X \rho]\tr[\hat Y \rho],
\end{equation} 
for any par of observables $\hat X \in {\cal A}_{\rm near}$ and $\hat Y \in {\cal A}_{\rm far}$.\end{proof}

In full analogy, one can prove a second theorem
\begin{theorem}
For any observables $\hat X, \hat Y \in {\cal A}_{\rm far}$, it holds that
\begin{equation}
    \<\hat X \hat Y\> - \<\hat X \>\< \hat Y\> = \tr[\hat X \hat Y \rho] - \tr[\hat X \rho]\tr[\hat Y \rho],
\end{equation}
where $\<\dots\>$ denotes the expectation value in the $n$-photon subtracted state and $\tr[\dots \rho]$ is the expectation value in the initial Gaussian state.
\end{theorem}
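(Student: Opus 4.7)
The plan is to mimic the structure of the proof of Theorem 1 in Appendix B, but in a setting that turns out to be strictly simpler because no ``near'' operators are involved. As in that proof, I would first reduce the statement to an identity for Wick monomials: expand $\hat X = \sum_{j,k} c_{j,k}\hat b^{\#}_{1,k}\dots \hat b^{\#}_{j,k}$ and $\hat Y = \sum_{j,k} d_{j,k}\hat c^{\#}_{1,k}\dots \hat c^{\#}_{j,k}$ as (possibly infinite) series in creation and annihilation operators supported on modes in ${\cal A}_{\rm far}$. By linearity, it suffices to establish
\begin{equation*}
\< \hat b^{\#}_{1} \dots \hat b^{\#}_{j} \hat c^{\#}_{1} \dots \hat c^{\#}_{j'}\> - \<\hat b^{\#}_{1} \dots \hat b^{\#}_{j}\>\< \hat c^{\#}_{1} \dots \hat c^{\#}_{j'}\> = \tr[\hat b^{\#}_{1}\dots \hat b^{\#}_{j}\hat c^{\#}_{1}\dots \hat c^{\#}_{j'}\rho] - \tr[\hat b^{\#}_{1}\dots \hat b^{\#}_{j}\rho]\tr[\hat c^{\#}_{1}\dots \hat c^{\#}_{j'}\rho]
\end{equation*}
for arbitrary Wick monomials of operators drawn from ${\cal A}_{\rm far}$.

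Next I would apply the perfect-matching expansion \eqref{eq:perfMatch} to the numerator in $\<\hat b^{\#}_{1}\dots \hat b^{\#}_{j}\hat c^{\#}_{1}\dots \hat c^{\#}_{j'}\>$, namely to $\tr[(\hat a^{\dag}_S)^n \hat b^{\#}_{1}\dots \hat b^{\#}_{j}\hat c^{\#}_{1}\dots \hat c^{\#}_{j'}(\hat a_S)^n \rho]$. The key input is the defining property of ${\cal A}_{\rm far}$: for every operator appearing in these monomials, $\trho[\hat b^{\#}_k \hat a_S]=\trho[\hat b^{\#}_k \hat a^{\dag}_S]=0$, and likewise for the $\hat c^{\#}_k$. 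Hence any perfect matching that pairs a subtraction operator with a $\hat b$ or $\hat c$ operator contributes zero, and the only surviving matchings are those that pair the $n$ copies of $\hat a^{\dag}_S$ with the $n$ copies of $\hat a_S$ among themselves, while independently matching the $\hat b$ and $\hat c$ operators among themselves. This factorization gives
\begin{equation*}
\tr[(\hat a^{\dag}_S)^n \hat b^{\#}_{1}\dots \hat b^{\#}_{j}\hat c^{\#}_{1}\dots \hat c^{\#}_{j'}(\hat a_S)^n \rho] = \tr[(\hat a^{\dag}_S)^n(\hat a_S)^n\rho]\,\tr[\hat b^{\#}_{1}\dots \hat b^{\#}_{j}\hat c^{\#}_{1}\dots \hat c^{\#}_{j'}\rho],
\end{equation*}
and dividing through by the normalization $\tr[(\hat a^{\dag}_S)^n(\hat a_S)^n\rho]$ yields $\<\hat b^{\#}_{1}\dots \hat b^{\#}_{j}\hat c^{\#}_{1}\dots \hat c^{\#}_{j'}\> = \tr[\hat b^{\#}_{1}\dots \hat b^{\#}_{j}\hat c^{\#}_{1}\dots \hat c^{\#}_{j'}\rho]$.

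The same argument, already carried out as Eq.~\eqref{eq:expB} in the proof of Theorem 1, shows that $\<\hat b^{\#}_{1}\dots \hat b^{\#}_{j}\>=\tr[\hat b^{\#}_{1}\dots \hat b^{\#}_{j}\rho]$ and $\<\hat c^{\#}_{1}\dots \hat c^{\#}_{j'}\>=\tr[\hat c^{\#}_{1}\dots \hat c^{\#}_{j'}\rho]$. Substituting these three identities into the left-hand side of the target equation immediately produces the right-hand side, completing the Wick-monomial case; reassembling the series expansions of $\hat X$ and $\hat Y$ then yields the theorem in full generality.

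Unlike Theorem 1, there is no obstacle from ``cross terms'' ${\cal T}$: those appeared there because ${\cal A}_{\rm near}$ operators have nonzero covariance with $\hat a_S$, which is exactly what is excluded here on both sides. The only point requiring care is thus bookkeeping, namely verifying that \emph{every} surviving perfect matching in \eqref{eq:perfMatch} cleanly separates the subtraction block from the $\{\hat b^{\#},\hat c^{\#}\}$ block; this is purely a consequence of the vanishing two-point covariances and involves no further computation beyond what Theorem 1 already established.
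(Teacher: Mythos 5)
Your proposal is correct and follows essentially the same route as the paper: the paper's own proof simply notes that the argument of Theorem~\ref{Theorem1} carries over with the cross terms ${\cal T}$ vanishing identically, which is precisely the factorization of the perfect-matching sum you spell out (every matching pairing a subtraction operator $\hat a_S$ or $\hat a^{\dag}_S$ with an operator from ${\cal A}_{\rm far}$ contributes zero). Your version is merely a more explicit unpacking of the same mechanism, so no further comment is needed.
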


\begin{proof}
The proof is fully analogous to Theorem \ref{Theorem1}. The only modification it that in the present case the cross terms all vanish, such that ${\cal T}=0$.
\end{proof}

\section{Moment analysis}\label{sec:Moments}

To provide a complementary quantitative grasp on the statistics of the degrees and clustering coefficients of the emergent networks of photon-number correlations as defined in Eq.~(\ref{Cij}), we analyzed the moments of these distributions.  More specifically, we considered the first four non-trivial moments: mean, variance, skew, and kurtosis. For an arbitrary stochastic variable $X$, these quantities are defined as
\begin{align}
    {\rm Mean} &= \mathbb{E}[X],\\
    {\rm Variance} &= \mathbb{E}[(X-\mathbb{E}[X])^2],\\
    {\rm Skewness} &= \frac{\mathbb{E}[(X-\mathbb{E}[X])^3]}{\mathbb{E}[(X-\mathbb{E}[X])^2]^{3/2}},\\
    {\rm Kurtosis} &= \frac{\mathbb{E}[(X-\mathbb{E}[X])^4]}{\mathbb{E}[(X-\mathbb{E}[X])^2]^{2}},
\end{align}
and thus they can be estimated from the data. Furthermore, we used error propagation methods to estimate the standard statistical error on each of these quantities.

\bibliography{biblio.bib}

\end{document}